\newtheorem{theorem}{Theorem}
\newcommand{\btheo}{\begin{theorem}}
\newcommand{\etheo}{\end{theorem}}
\newcommand{\bproof}{\begin{proof}}
\newcommand{\eproof}{\end{proof}}
\newtheorem{definition}[theorem]{Definition}
\newcommand{\bdefi}{\begin{definition}}
\newcommand{\edefi}{\end{definition}}
\newtheorem{fact}[theorem]{Fact}
\newcommand{\bprop}{\begin{fact}}
\newcommand{\eprop}{\end{fact}}
\newtheorem{corollary}[theorem]{Corollary}
\newcommand{\bcor}{\begin{corollary}}
\newcommand{\ecor}{\end{corollary}}
\newtheorem{example}[theorem]{Example}
\newcommand{\bex}{\begin{example}}
\newcommand{\eex}{\end{example}}
\newtheorem{lemma}[theorem]{Lemma}
\newcommand{\blemma}{\begin{lemma}}
\newcommand{\elemma}{\end{lemma}}
\newtheorem{remark}[theorem]{Remark}
\newcommand{\bremark}{\begin{remark}}
\newcommand{\eremark}{\end{remark}}
\newtheorem{conj}[theorem]{Conjecture}
\newcommand{\bconj}{\begin{conj}}
\newcommand{\econj}{\end{conj}}
\newcommand{\naturals}{\ensuremath{\mathbb{N}}}
\def\0{{\tt 0}} 
\def\1{{\tt 1}} 
\def\?{{\tt *}} 
\renewcommand{\mid}{\,|\,}
\begin{document}
\title{On the scaling of Polar Codes:\\ II. The behavior of un-polarized channels}
\author{ S. Hamed Hassani, Kasra Alishahi and Rudiger Urbanke\thanks{S.H. Hassani and R. Urbanke are with EPFL, School of Computer
 \& Communication Sciences, \{seyedhamed.hassani, ruediger.urbanke\}@epfl.ch. Kasra Alishahi is with the department of Mathematical sciences, Sharif university of technology, alishahi@sharif.edu. This work was supported by grant no 200021-121903 of the Swiss
National Foundation.
}
}

\maketitle

\begin{abstract}
We provide upper and lower bounds on the escape rate of the Bhattacharyya
process corresponding to polar codes and transmission over the the
binary erasure channel.  More precisely, we bound the exponent of the
number of sub-channels whose Bhattacharyya constant falls in a fixed
interval $[a,b]$. Mathematically this can be stated as bounding the
limit $\lim_{n \to \infty} \frac{1}{n} \ln \mathbb{P}(Z_n \in [a,b])$,
where $Z_n$ is the Bhattacharyya process. The quantity $\mathbb{P}(Z_n \in [a,b])$ represents the fraction of sub-channels that are still un-polarized at time $n$.  \end{abstract}
\section{Introduction and main result}
The construction of polar codes (\cite{Ari09}) is done by exploring a phenomenon called channel polarization in which from a BMS channel $W$, $N=2^n$ sub-channels $\{W_{2^n} ^{(i)}\}_{1 \leq i \leq 2^n}$ are constructed with the property that almost a fraction of $I(W)$ of them tend to become noise-less (i.e., have capacity close to $1$) and a fraction of $1-I(W)$ of them tend to become completely noisy (i.e., have capacity close to $0$). Hence, as $n$ grows large, nearly all the sub-channels are in one of the following two states: highly noisy or highly noiseless. The construction of these channels is done recursively, using a transform called channel splitting. Channel splitting is a transform which takes a BMS channel $W$ as input and outputs two BMS channels $W ^+ $ and $W^-$. We denote this transform by $W \rightarrow (W^+, W^-)$.
To analyze the behavior of the sub-channels, a probabilistic approach is introduced in \cite{Ari09} and \cite{ArT09}. In this regard, the polarization process of a BMS channel $W$, denoted by $W_n$, is defined by $W_0=W$ and
\begin{equation}
W_{n+1}=  \left\{
\begin{array}{lr}
W_n ^{+} &  ; \text{with probability $\frac 12$},\\
W_n ^{-} &  ; \text{with probability $\frac 12$}.
\end{array} \right.
\end{equation}
As a result at time $n$ the process $W_n$ uniformly and randomly outputs a sub-channel from a set of $2^n$ possible sub-channels which are precisely the sub-channels $\{W_{2^n} ^{(i)}\}_{1 \leq i \leq 2^n}$.\footnote{For more details, please refer to \cite{HU10}}  The Bhattacharyya process of channel $W$ is then defined by $Z_n=Z(W_n)$, where $Z()$ denotes the Bhattacharyya constant. It was shown in \cite{Ari09} that the process $Z_n$ is a super-martingale that converges to a random variable $Z_{\infty}$. The value of $Z_{\infty}$ is either $0$ (representing the fraction of noiseless sub-channels) or $1$ (representing the fraction of noisy sub-channels) with $\mathbb{P}^{W}(Z_{\infty}=0)=I(W)$.  We call the two values $0$ and $1$ the \textit{fixed points} of the process $Z_n$ meaning that as $n$ tends to infinity, with probability one the process $Z_n$ ends up in one of the these two fixed points. The asymptotic behavior of the process $Z_n$ around the points $0$ and $1$ has been studied in \cite{ArT09} and \cite{HU10}. However at each time $n$ there still exists a positive probability, although very small, that the process $Z_n$ takes a value not so close to the fixed points. The main objective of this paper is to study these vanishing probabilities.  More precisely, let $0<a<b<1$ be constants. The quantity   $ \mathbb{P}^{W}(Z_n \in [a,b])$ represents the probability that the value of $Z_n$ is away from the two fixed points $0$ and $1$ or in other words has \textit{escaped} from the fixed points. For a channel $W$ we define the upper escape rate $\lambda_u^W$  and the lower escape rate $\lambda_l^W$ as \footnote{All the logarithms in this paper are in base 2.}
\begin{align}
\lambda_u^{W} = \lim_{[a,b] \to (0,1)}\limsup_{n \to \infty} \frac{1}{n} \log \mathbb{P}^{W}(Z_n \in [a,b]) \label{escape_def11} \\ 
\lambda_l^{W}= \lim_{[a,b] \to (0,1)} \liminf_{n \to \infty} \frac{1}{n} \log \mathbb{P}^{W}(Z_n \in [a,b]) . \label{escape_def12}
\end{align}
It is easy to see that the above defined quantities are well defined. Also, when $\lambda_u^{W}=\lambda_l^{W}=\lambda^{W}$, we say that the escape rate of the channel $W$ exists and is equal to $\lambda^{W}$. In words, as $n$ goes large, one expects that
\begin{align*}
 2^{\lambda_l^W n} \lessapprox \mathbb{P}^{W}(Z_n \in [a,b]) \lessapprox 2^{\lambda_u^W n}.
\end{align*}
In the context of polar codes, the quantity  $ \mathbb{P}^{W}(Z_n \in [a,b])$ represents the ratio of the sub-channels that have not ``polarized`` at time $n$.
In this paper we consider the case when the channel $W$ is a binary erasure channel (BEC). In the analysis of polar codes, the analysis of  binary erasure channels is more significant than other BMS channels. This is because firstly the Bhattacharyya process $Z_n = Z(W_n)$ corresponding to a BEC channel with erasure probability $z$ (BEC($z$)) is relatively more easier to analyze and it can be described in a closed numerical form  (\cite{Ari09}) as $Z_0 =z$ and
\begin{equation} \label{process}
Z_{n+1}=  \left\{
\begin{array}{lr}
{Z_n}^2 &  ; \text{with probability } \frac{1}{2},\\
2Z_n-{Z_n}^2 &  ; \text{with probability } \frac{1}{2}.
\end{array} \right.
\end{equation}
Secondly the quantities corresponding to BEC channels often provide bounds for general BMS channels. Let the functions $p_n^{a,b}(z)$ and $\theta_n^{a,b}(z)$ be defined as \footnote{To keep things simple, instead of $\mathbb{P}^{\text{BEC($z$)}} (Z_n \in [a,b])$ we write $\mathbb{P}^{z} (Z_n \in [a,b])$.}
\begin{align}
&p_n^{a,b}(z)= \mathbb{P}^{z}(Z_n \in [a,b]), \label{p}\\
&\theta_n^{a,b}(z)= \frac{1}{n} \log p_n ^{a,b}(z) \label{theta}.
\end{align}
As a result the upper and lower escape rate for the channel BEC($z$) can be stated as
\begin{align}
\lambda_u^{\text{BEC($z$)}} = \lim_{a \to 0, b \to 1}\limsup_{n \to \infty} \theta_n^{a,b}(z)\label{escape_def111} \\
\lambda_l^{\text{BEC($z$)}}= \lim_{a \to 0, b \to 1} \liminf_{n \to \infty} \theta_n^{a,b}(z) \label{escape_def112}.
\end{align}
In the sequel, we slightly modify the definition of the escape rates given in \eqref{escape_def111} and \eqref{escape_def112} and consider the following quantities,
\begin{align}\label{escape_def1}
\lambda_{u}(z,a,b,\delta) = \limsup_{n \to \infty} \sup_{x \in [z-\delta, z+\delta] } \theta_n^{a,b}(x)  \\
\lambda_{l}(z,a,b,\delta)=  \liminf_{n \to \infty} \sup_{x \in [z-\delta, z+\delta] } \theta_n^{a,b}(x)  .
\end{align}
where $\delta \in (0,1)$ is chosen in a way that $[z-\delta, z+\delta] \subseteq (0,1)$ (we call such a pair of $(z,\delta)$ a consistent pair). In words, we allow a small perturbation, namely $\delta$, in the erasure probability of the channel and define the escape rates accordingly. Therefore, when the value of $\delta$ tends to $0$, the above quantities are a good estimate of the ones given in \eqref{escape_def111} and \eqref{escape_def112}. In this paper we first show that 
\begin{lemma}\label{lim_equal}
The value of $\lambda_{u}(z,a,b,\delta)$ and $\lambda_{l}(z,a,b,\delta) $ is the same for all choices of $a$, $b$ and  $(z,\delta)$ such that $a \leq b^2$. We denote the two values by $\lambda_{u}^{\text{BEC}}$  and $\lambda_{l}^{\text{BEC}}$   respectively. \\
\QED
\end{lemma}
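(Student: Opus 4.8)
The plan is to work with the two strictly increasing bijections $f_0(z)=z^2$ and $f_1(z)=2z-z^2=1-(1-z)^2$ of $[0,1]$, for which $Z_{n+1}=f_{C_{n+1}}(Z_n)$ with the $C_i$ i.i.d.\ uniform on $\{0,1\}$; thus $Z_n=f_{C_n}\circ\dots\circ f_{C_1}(Z_0)$, the trajectory following a prescribed word $w\in\{0,1\}^n$ has probability $2^{-n}$, one has $f_0(z)<z<f_1(z)$ on $(0,1)$, and the process admits the usual monotone coupling. Writing $f_w=f_{w_n}\circ\dots\circ f_{w_1}$, one has $p_n^{a,b}(x)=2^{-n}\#\{w\in\{0,1\}^n:\ f_w(x)\in[a,b]\}$. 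I would first record two easy facts: (i) from $p_{n+1}^{a,b}(x)=\tfrac12 p_n^{a,b}(f_0x)+\tfrac12 p_n^{a,b}(f_1x)$ and $f_0([0,1])\cup f_1([0,1])=[0,1]$, the sequence $M_n:=\sup_{[0,1]}p_n^{a,b}$ is non-increasing, so $\lim_n\tfrac1n\log M_n$ exists; (ii) any two admissible intervals ($a\le b^2$) lie in a common admissible interval, since $[\min(a,a'),\max(b,b')]$ is again admissible, and any two consistent pairs lie inside a common admissible interval $[\eta,1-\eta']$ with $\eta,\eta'$ small. Hence it suffices to show, for a nested pair $[a,b]\subseteq[A,B]$ of admissible subintervals of $(0,1)$, that $\lambda_u(z,a,b,\delta)=\lambda_u(z,A,B,\delta)$ (target independence, fixed window), $\lambda_u(z,A,B,\delta)=\lambda_u(z',A,B,\delta')$ (window independence, fixed target), and likewise for $\lambda_l$.

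The main tool is a uniform reachability statement: \emph{for admissible $[a,b]$ and any compact $J\subseteq(0,1)$ there are $k\in\naturals$ and $c>0$ with $\inf_{y\in J}\mathbb{P}^y(Z_k\in[a,b])\ge c$.} Granting this with $J=[A,B]$, the Markov property gives for every $x$
\[
p_{n+k}^{a,b}(x)\ \ge\ \mathbb{E}^x\!\big[\indicator{Z_n\in[A,B]}\,\mathbb{P}^{Z_n}(Z_k\in[a,b])\big]\ \ge\ c\,p_n^{A,B}(x).
\]
Taking $\sup_{x\in[z-\delta,z+\delta]}$, then $\tfrac1{n+k}\log(\cdot)$, and using $\tfrac n{n+k}\to1$, $\tfrac1{n+k}\log c\to0$ and $p_n^{A,B}\le1$, one gets $\lambda_u(z,a,b,\delta)\ge\lambda_u(z,A,B,\delta)$ and $\lambda_l(z,a,b,\delta)\ge\lambda_l(z,A,B,\delta)$; the reverse inequalities are immediate from $p_n^{a,b}\le p_n^{A,B}$. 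Window independence is obtained in the same way, now using the reachability statement together with its ``image'' variant (for compact $K\subseteq(0,1)$ and all large $j$, $K\subseteq\bigcup_{|w|=j}f_w([z'-\delta',z'+\delta'])$) to transport the supremum from one window to the other up to a bounded shift in $n$.

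It remains to prove the reachability statement, and this is the crux. Its bounded-time version is elementary and is exactly where $a\le b^2$ enters: if $y>b$, the decreasing orbit $y,y^2,y^4,\dots$ cannot pass directly from above $b$ to below $a$, since that would require a value in $(b,\sqrt a)=\varnothing$; hence it enters $[a,b]$. If $y<a$ the increasing orbit $y,f_1(y),\dots$ either enters $[a,b]$ or overshoots above $b$, reducing to the previous case; compactness of $J$ bounds the number of steps uniformly, giving $\inf_{y\in J}\mathbb{P}^y(\exists\,t\le K:\ Z_t\in[a,b])\ge c_0>0$. This alone already settles the $\lambda_u$ statements: $\mathbb{P}^x(\exists\,t\in[n,n+K]:\ Z_t\in[a,b])$ is squeezed between $c_0\,p_n^{A,B}(x)$ and $(K+1)\max_{n\le t\le n+K}p_t^{a,b}(x)$, and $\limsup$ ignores a window of bounded length.

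The genuinely delicate point — needed only for the $\liminf$, which does not tolerate such a window — is to reach $[a,b]$ in an \emph{exactly prescribed} large number of steps, uniformly over $J$; equivalently, $[A,B]\subseteq\bigcup_{|w|=k}f_w^{-1}([a,b])$ for all large $k$. I expect this to be the main obstacle. The idea: once the orbit has entered $[a,b]$, the thickness guaranteed by $a\le b^2$ — concretely, the intervals $f_0^{-j}([a,b])=[a^{2^{-j}},b^{2^{-j}}]$ overlap consecutively, so there is room to manoeuvre near $[a,b]$ — lets one use a suitable alternation of $f_0$ and $f_1$ to keep returning to $[a,b]$ and thereby tune the total number of steps to any large target value, while the repelling fixed point $x^\star=\tfrac{\sqrt5-1}{2}$ of $g=f_1\circ f_0$ (with $|g'(x^\star)|>1$) rules out the pathology in which all long trajectories have already fled to $\{0,1\}$ and makes the construction uniform on compacts. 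Once this uniform exact-time reachability is established, the $\limsup/\liminf$ bookkeeping above finishes the proof.
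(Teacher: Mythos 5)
There is a genuine gap, and you have actually named it yourself: the exact-time reachability statement needed for the $\liminf$. Your bounded-time reachability argument (orbits under $T_1$ cannot skip over $(\sqrt a,b)$ since $a\le b^2$, orbits under $T_0$ similarly, compactness of $J$) is correct and in fact more elementary than the paper's route, which gets the same bounded-time statement (Lemma~\ref{jump}) from denseness of backward sets via threshold points and the Baire category theorem. Your ``two easy facts'' and the subsequent $\limsup/\liminf$ bookkeeping are essentially the same shifting-and-comparing argument as the paper's Corollary~\ref{inequalities}. (One small slip: monotonicity of $M_n=\sup p_n^{a,b}$ does not by itself give convergence of $\tfrac1n\log M_n$ --- a monotone sequence can still have $\tfrac1n\log M_n$ oscillate --- but you never actually use this.)

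Where the proposal stops is exactly where the paper inserts its crucial, and quite simple, observation: the \emph{dwell inequality} $2\,p^{a,b}_{n+1}(z)\ge p^{a,b}_n(z)$, which is Lemma~\ref{inequality}(a). Once this is available, there is no need for ``exact-time'' reachability at all: a word of length $j\le m$ reaching $[a,b]$ can be padded to any length $k\ge m$ at a geometric cost $2^{-(k-j)}$, which is swallowed by the $\tfrac1n\log$ normalization. The dwell inequality itself is a one-step covering statement: $[a,b]\subseteq T_0^{-1}[a,b]\cup T_1^{-1}[a,b]$, i.e., $\sqrt a\le 1-\sqrt{1-b}$. Note that this is \emph{not} implied by $a\le b^2$ alone (e.g.\ $a=0.8$, $b=0.9$ fails it), which is precisely why the paper first uses Lemma~\ref{equal} to reduce to the fixed reference interval $[\tfrac14,\tfrac34]$, where the covering holds with equality. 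Your proposed workaround via $f_0^{-j}[a,b]$ overlaps and the repelling fixed point of $f_1\circ f_0$ is a plausible direction, but as written it is a sketch, not a proof, and it is also substantially harder than the paper's two-line dwell argument. To close the gap cleanly in your framework: first reduce, as the paper does, to an interval $[a,b]$ satisfying $T_0^{-1}[a,b]\cup T_1^{-1}[a,b]\supseteq[a,b]$; then bounded-time reachability plus dwelling give exact-time reachability immediately, and your $\limsup/\liminf$ bookkeeping finishes.
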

Numerical simulations show that the values of upper and lower escape rate are both equal to $-0.2758$ for all the BEC channels. In this paper we provide upper and lower bounds on the values of $\lambda_{u}^{\text{BEC}}$  and $\lambda_{l}^{\text{BEC}}$.
\begin{theorem}\label{bounds}
 We have
\begin{equation}
 -0.2786 \approx \frac{1}{2 \ln 2} -1 \leq \lambda_{l}^{\text{BEC}} \leq  \lambda_{u}^{\text{BEC}} \leq -0.2669.
\end{equation}
\QED
\end{theorem}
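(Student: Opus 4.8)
The plan is to study the linear operator $\mathcal{L}$ defined by $(\mathcal{L}g)(z)=\tfrac12 g(z^2)+\tfrac12 g(2z-z^2)$ on functions on $[0,1]$: by \eqref{process} the functions in \eqref{p} satisfy $p_0^{a,b}=\ind_{[a,b]}$ and $p_{n+1}^{a,b}=\mathcal{L}p_n^{a,b}$, so $p_n^{a,b}=\mathcal{L}^n\ind_{[a,b]}$, and the two escape rates are exactly the exponential growth rates of $\mathcal{L}^n\ind_{[a,b]}$. By Lemma~\ref{lim_equal} I may choose any interval $[a,b]$ with $a\le b^2$ at my convenience, and in particular take it as large, or as a short interval near a prescribed point, as needed. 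Both bounds will come from a one‑step comparison: a super‑solution $\mathcal{L}g\le\rho g$ for the upper bound, and a sub‑solution $\mathcal{L}g\ge\rho' g$ for the lower bound.

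For the upper bound I would track $V_n:=Z_n(1-Z_n)\in[0,\tfrac14]$. A direct computation from \eqref{process} gives $V_{n+1}=V_n\,r(Z_n)$ on the squaring branch and $V_{n+1}=V_n\,r(1-Z_n)$ on the other branch, where $r(x):=x+x^2$; hence for every $\alpha\in(0,1)$,
\begin{equation*}
\mathbb{E}\!\left[V_{n+1}^{\alpha}\mid Z_n\right]=V_n^{\alpha}\,F_\alpha(Z_n),\qquad F_\alpha(x):=\tfrac12 r(x)^{\alpha}+\tfrac12 r(1-x)^{\alpha}.
\end{equation*}
Since $t\mapsto t^{\alpha}$ is strictly concave and $\tfrac12\big(r(x)+r(1-x)\big)=x^2-x+1\le 1$, the quantity $\rho(\alpha):=\sup_{x\in[0,1]}F_\alpha(x)$ is strictly below $1$. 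Iterating the identity gives $\mathbb{E}^{z}[V_n^{\alpha}]\le\rho(\alpha)^n V_0^{\alpha}$, and because $Z_n\in[a,b]$ forces $V_n\ge a(1-b)$, Markov's inequality yields $p_n^{a,b}(z)\le\big(a(1-b)\big)^{-\alpha}\rho(\alpha)^n V_0^{\alpha}$, hence $\lambda_u^{\text{BEC}}\le\log\rho(\alpha)$ for every $\alpha$. Optimising the one‑variable function $\rho(\alpha)$ — whose defining supremum over $x$ is attained in the interior of $(0,1)$ for the optimal $\alpha$ — should produce the stated value $-0.2669$.

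For the lower bound I would instead construct a nonzero $\psi\ge 0$ satisfying $\tfrac12\psi(z^2)+\tfrac12\psi(2z-z^2)\ge\rho'\psi(z)$ on $[0,1]$ whose support lies in some admissible $[a,b]$; then $p_n^{a,b}=\mathcal{L}^n\ind_{[a,b]}\ge\|\psi\|_\infty^{-1}\mathcal{L}^n\psi\ge\|\psi\|_\infty^{-1}(\rho')^n\psi$ on $\mathrm{supp}\,\psi$, and Lemma~\ref{lim_equal} gives $\lambda_l^{\text{BEC}}\ge\log\rho'$. The inequality severely constrains $\psi$: its support cannot reach either fixed point (at the left end of $\mathrm{supp}\,\psi$ the term $\psi(z^2)$ is absent, and the inequality then forces $\psi$ to grow by a fixed factor under $z\mapsto 2z-z^2$, which is incompatible with reaching $0$), so $\psi$ must be a bump on a short interval $[\ell,u]\subset(0,1)$ with $u$ of order $2\sqrt{\ell}$. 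On the left piece $[\ell,\sqrt{\ell})$ only the doubling branch $z\mapsto 2z-z^2\approx 2z$ keeps $z$ in the support, which forces $\psi(x)\asymp x^{\log(2\rho')}$ there; on the right piece only $z\mapsto z^2$ survives and prescribes a matching decay; in the middle, where both branches contribute, there is slack to close the estimate. The largest $\rho'$ for which these requirements can be made consistent with $\psi$ bounded turns out to be $\rho'=\sqrt{e}/2$, i.e.\ $\log\rho'=\tfrac{1}{2\ln 2}-1$; at that threshold the left piece has $\psi(x)\asymp x^{1/(2\ln 2)}$, the exact exponent for which one doubling step multiplies $\psi$ by $2^{1/(2\ln 2)}=\sqrt{e}$.

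The hard part is the lower bound. The upper bound is essentially the fractional‑moment method once one notices the multiplicative behaviour of $Z_n(1-Z_n)$, the remaining work being the elementary optimisation of $\rho(\alpha)$. For the lower bound one must write down an explicit admissible profile $\psi$ — equivalently, a family of at least roughly $(\sqrt{e})^{\,n}$ length‑$n$ sign patterns whose composition lands in $[a,b]$ — and verify $\tfrac12\psi(z^2)+\tfrac12\psi(2z-z^2)\ge(\sqrt{e}/2)\psi(z)$ on all of $[0,1]$, an inequality that is essentially tight near the fixed points and therefore leaves almost no room in the bulk. Making the left‑edge, right‑edge and bulk behaviours of $\psi$ fit together consistently, and pinning down the exact numerical optima in both bounds, is where the real effort goes.
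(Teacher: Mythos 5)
Your upper bound is sound and is essentially the paper's argument. The paper tracks $Q_n=Z_n^{\alpha}(1-Z_n)^{\beta}$ and minimizes $\zeta(\alpha,\beta)=\tfrac12\max_z\{z^{\alpha}(1+z)^{\beta}+(2-z)^{\alpha}(1-z)^{\beta}\}$; you restrict to $\alpha=\beta$ (i.e.\ to $V_n^{\alpha}=(Z_n(1-Z_n))^{\alpha}$). That restriction loses nothing: since $(\alpha,\beta)\mapsto z^{\alpha}(1+z)^{\beta}$ is the exponential of a linear form and hence convex, $\zeta$ is convex, and the change of variable $z\mapsto 1-z$ shows $\zeta(\alpha,\beta)=\zeta(\beta,\alpha)$, so the minimum is attained on the diagonal and $\zeta(\alpha,\alpha)$ is exactly your $\rho(\alpha)$. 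The remainder (Markov bound, numerical minimization to $-0.2669$) is the same as in the paper.

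The lower bound is where there is a genuine gap. Your plan — build a sub‑solution $\psi\ge 0$ supported in $(0,1)$ with $\mathcal{L}\psi\ge\rho'\psi$ and hence $p_n\ge\|\psi\|_\infty^{-1}(\rho')^n\psi$ — is a legitimate Perron–Frobenius‑type strategy, and the algebra translating it into $\lambda_l^{\text{BEC}}\ge\log\rho'$ via Lemma~\ref{lim_equal} is correct. But you never actually produce a $\psi$, and you acknowledge that matching the left‑edge, bulk, and right‑edge behaviors ``is where the real effort goes.'' The assertion that the optimal threshold for this construction ``turns out to be'' $\rho'=\sqrt{e}/2$ is not derived: the left‑edge heuristic $\psi(2z-z^2)\ge 2\rho'\psi(z)$ only forces $\psi(x)\gtrsim x^{\log_2(2\rho')}$ locally near $z=\ell$, and since $\ln(2-z)$ decreases on $[\ell,\sqrt{\ell}]$ a pure power law does not in fact satisfy the inequality uniformly even on that piece, so the exponent $1/(2\ln 2)$ is not automatic. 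Nor is it clear that the supremum of admissible $\rho'$ equals $\sqrt{e}/2$ rather than, say, a value closer to the conjectured $-0.2758$; the sub‑solution method, if it could be carried through, would naturally target the latter, not the Jensen‑type bound.

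The paper's lower bound avoids all of this. It writes
\begin{equation*}
\int_0^1 p_n(z)\,dz=\mathbb{E}\,\big|\phi_{\omega_n}^{-1}[a,b]\big|,
\end{equation*}
applies Jensen ($\log\mathbb{E}\ge\mathbb{E}\log$), and evaluates $\lim_n\mathbb{E}\tfrac1n\log|\phi_{\omega_n}^{-1}[a,b]|$ exactly, by observing that $\phi_{\omega_n}^{-1}$ has the same law as the forward composition $\psi_{\omega_n}=T_{b_n}^{-1}\circ\cdots\circ T_{b_1}^{-1}$, that Lebesgue measure is the unique invariant measure for this backward chain, and that the chain‑rule expansion of $\log\psi'_{\omega_n}$ is a Birkhoff sum converging by the ergodic theorem to $\tfrac12-\ln2$ nats, i.e.\ $\tfrac{1}{2\ln 2}-1$ bits. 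A separate lemma (Lemma~\ref{bound_average}, using the backward‑set machinery) then lifts the bound on the integral to a bound on $\lambda_l^{\text{BEC}}$. None of this appears in your proposal, and it is precisely the part that makes the constant $\tfrac{1}{2\ln 2}-1$ emerge. As written, your lower bound is an unverified plan rather than a proof.
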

The outline of the paper is as follows. In Section~\ref{Do}  we
introduce the basic notations, definitions and tools used in this paper.
Section~\ref{Se} contains the proof of the main results of this paper followed by
section IV that contains further proofs regarding the auxiliary lemmas stated in the paper.
\section{Definitions, notations and preliminary lemmas } \label{Do}
In this section we first give a different but entirely equivalent description of the process $Z_n$ given in \eqref{process} with the help of a collection of maps denoted by $\phi_{\omega_n}$ for $n \in \naturals$. From this we also derive the relation between the quantity  $\frac{1}{n} \log \mathbb{P}^{z}(Z_n \in [a,b])$ and the maps $\phi_{\omega_n}$. We then continue by analyzing the functions $p_n^{a,b}(z)$ and $\theta_n^{a,b}(z)$ defined in \eqref{p} and \eqref{theta} and derive the relations between the functions $\theta_n^{a,b}$ for different values of $n$ and $z$.
\subsection{analyzing the random maps $\phi_{\omega_n}$}
Let $\{B_n\}_{n \in \naturals}$ be a sequence of iid Bernoulli($\frac12$) random variables. Denote by $(\mathcal{F}, \Omega, \mathbb{P})$ the probability space generated by this sequence and let  $(\mathcal{F}_n, \Omega_n, \mathbb{P}_n)$ be the probability space generated by $(B_1, \cdots,B_n)$. We now couple the process $Z_n$ with the sequence $\{B_n\}_{n \in \naturals}$. We start by $Z_0=z$ and
\begin{equation} \label{process_B}
Z_{n+1}=  \left\{
\begin{array}{lr}
{Z_{n-1}}^2 &  ; \text{if } B_n=1,\\
2Z_{n-1}-{Z_{n-1}}^2 &   ; \text{if } B_n=0.
\end{array} \right.
\end{equation}
Also, consider the two maps $T_0,T_1:[0,1]\longrightarrow [0,1]$ defined as
\begin{equation}
T_0(x)=2x-x^2 , T_1(x)=x^2.
\end{equation}
The value of $Z_{n}$ is obtained by applying $T_{B_i}$ on the value of $Z_{n-1}$, i.e., $Z_n = T_{B_n}(Z_{n-1})$. The same rule applies for obtaining the value of $Z_{n-1}$ form $Z_{n-2}$ and so on. Thinking this through recursively, the value of $Z_n$ is obtained from the starting point of the process,  $Z_0=z$, via the following maps.
\begin{definition}\label{phi}
For each $n \in \naturals$ and a realization $(b_1, \cdots,b_n) \triangleq \omega_n \in \Omega_n$ define the map $\phi_{\omega_n}$ by
\begin{equation}\nonumber
\phi_{\omega_n}=T_{b_n} \circ T_{b_{n-1}} \circ \cdots T_{b_1}.
\end{equation}
Let $\Phi_n$ be the set of all such $n$-step maps. Thus each $\phi_{\omega_n} \in \Phi_n$ is with a one-to-one correspondence with a realization  $(b_1, \cdots, b_n)$ of $ \Omega_n $.
\end{definition}
As a result, an equivalent description of the process $Z_n$ is as follows. At time $n$ the value of $Z_n$ is obtained by picking uniformly at random one of the functions in $\phi_{\omega_n} \in \Phi_n$ and assigning the value  $ \phi_{\omega_n}(z)$ to $Z_n$. Consequently we have,
\begin{align}\label{equ:equivalent}
 \mathbb{P}^z(Z_n \in [a,b]) &= \sum_{\phi_{\omega_n} \in \Phi_n}\frac{1}{2^n} \mathbb{I} (\phi_{\omega_n}(z) \in [a,b]) \\
& = \sum_{\phi_{\omega_n} \in \Phi_n}\frac{1}{2^n} \mathbb{I}(z \in \phi_{\omega_n} ^{-1}([a,b])). \nonumber
\end{align}
Therefore, in order to analyze the behavior of the quantity $\frac{1}{n} \log  \mathbb{P}^z(Z_n \in [a,b]) $ as $n$ grows large, characterizing the asymptotic behavior of the random maps $\phi_{\omega_n}$ is necessary. Continuing the theme of Definition~\ref{phi}, one can correspond to each realization of the infinite sequence $\{B_n\}_{n \in \naturals}$, denoted by $\{b_n\}_{n \in \naturals}$, a
sequence of maps $\phi_{\omega_1}(z), \phi_{\omega_2}(z),\cdots$, where $\omega_i\triangleq (b_1, \cdots,b_i)$. We call the sequence
$\{\phi_{\omega_k}\}_{k \in \naturals}$ the corresponding sequence of maps for
the realization $\{b_k\}_{k \in \naturals}$. We also use the
realization $\{b_k\}_{k \in \naturals}$ and its corresponding
$\{\phi_{\omega_k}\}_{k \in \naturals}$ interchangeably.
We now focus more on the asymptotic characteristics of the functions $\phi_{\omega_n}$. Firstly, since $\phi_{\omega_n}(z)$  has the same law as $Z_n$ starting at
$z$, we conclude that for $z \in (0,1)$ with
probability one, the quantity $\lim_{k \to \infty} \phi_{\omega_k} (z)$ takes on a value in the set $\{0,1\}$ . In Figure \ref{sample_threshold} the the functions $\phi_{\omega_n}$ are plotted for a random realization. As it is apparent from Figure~\ref{sample_threshold}, the functions $\phi_{\omega_n}$ seem to converge point-wise to a step function. This is justified in the following lemma.
\begin{figure}[htb]
\centering
 \includegraphics[width=9cm]{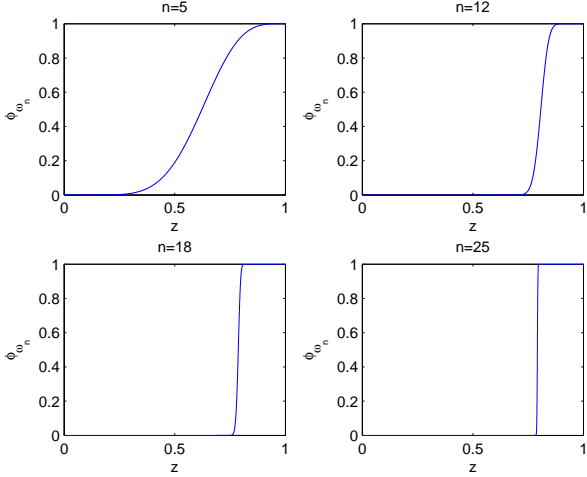}
 \caption{The functions $\phi_{\omega_n}$ associated to a random realization are plotted. As we see as $n$ grows large, the functions $\phi_{\omega_n}$ converge point-wise to a step function.}
 \label{sample_threshold}
 \end{figure}
\begin{lemma}[Almost every realization has a threshold point]\label{threshold}
For almost every realizations of $\omega \triangleq \{b_k\}_{k \in \naturals} \in \Omega$, there exists a point
$z_{\omega}^* \in [0,1]$, such that
\begin{equation}\nonumber
\lim_{n \to \infty} \phi_{\omega_n}(z) \rightarrow  \left\{
\begin{array}{lr}
0  & z \in [0,z_{\omega}^*) \\
1 &  z \in (z_{\omega}^*,1]
\end{array} \right.
\end{equation}
Moreover, $z_{\omega}^*$ has uniform distribution on $[0,1]$. We call the point $z_{\omega}^*$ the threshold point of the realization $\{b_k\}_{k \in \naturals}$ or the threshold point of its corresponding sequence of maps $\{\phi_{\omega_k}\}_{k \in \naturals}$.
\\ \QED
\end{lemma}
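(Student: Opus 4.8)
The plan is to exploit two structural features of the maps $T_0$ and $T_1$ and then to reduce the uncountable family of starting points to a countable one. First I note that $T_0(x)=2x-x^2$ and $T_1(x)=x^2$ are both nondecreasing on $[0,1]$ and fix the endpoints $0$ and $1$; consequently every composition $\phi_{\omega_n}$ is a nondecreasing self-map of $[0,1]$ with $\phi_{\omega_n}(0)=0$ and $\phi_{\omega_n}(1)=1$. In particular, if $z<z'$ then $\phi_{\omega_n}(z)\le\phi_{\omega_n}(z')$ for all $n$, hence $\limsup_{n}\phi_{\omega_n}(z)\le\limsup_{n}\phi_{\omega_n}(z')$ and likewise for the $\liminf$.

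Next I would pass to rationals. For each rational $q\in(0,1)$, $\phi_{\omega_n}(q)$ has the same law as $Z_n$ started from $q$, so it converges almost surely to a limit $\ell_\omega(q)\in\{0,1\}$, and by the convergence result recalled in the introduction (together with $I(\mathrm{BEC}(q))=1-q$) one has $\mathbb{P}(\ell_\omega(q)=1)=q$. Intersecting over the countably many rationals, there is an event $\Omega^\star$ of full probability on which $\ell_\omega(q)$ exists and lies in $\{0,1\}$ for every rational $q$. On $\Omega^\star$ the monotonicity above shows that $S_1(\omega):=\{q\in\mathbb{Q}\cap(0,1):\ell_\omega(q)=1\}$ is upward closed among the rationals while its complement $S_0(\omega)$ is downward closed; hence there is a single cut value $z_\omega^*\in[0,1]$ with $\mathbb{Q}\cap(0,z_\omega^*)\subseteq S_0(\omega)$ and $\mathbb{Q}\cap(z_\omega^*,1)\subseteq S_1(\omega)$. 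For an arbitrary real $z<z_\omega^*$ I pick a rational $q\in(z,z_\omega^*)$ and get $\phi_{\omega_n}(z)\le\phi_{\omega_n}(q)\to 0$, so $\phi_{\omega_n}(z)\to 0$; symmetrically $\phi_{\omega_n}(z)\to 1$ whenever $z>z_\omega^*$. This is precisely the threshold property.

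For the distribution of $z_\omega^*$, fix $z\in(0,1)$. On $\Omega^\star$ one has the inclusions $\{z_\omega^*<z\}\subseteq\{\lim_{n}\phi_{\omega_n}(z)=1\}\subseteq\{z_\omega^*\le z\}$: the first holds because a rational $q\in(z_\omega^*,z)$ lies in $S_1(\omega)$ and forces $\phi_{\omega_n}(z)\ge\phi_{\omega_n}(q)\to1$, and the second because $z_\omega^*>z$ would produce a rational $q\in(z,z_\omega^*)\cap S_0(\omega)$ with $\phi_{\omega_n}(z)\le\phi_{\omega_n}(q)\to0$. Since $\mathbb{P}(\lim_{n}\phi_{\omega_n}(z)=1)=\mathbb{P}^z(Z_\infty=1)=z$, this yields $\mathbb{P}(z_\omega^*<z)\le z\le\mathbb{P}(z_\omega^*\le z)$ for every $z\in(0,1)$; applying the bound $\mathbb{P}(z_\omega^*<z')\le z'$ and letting $z'\downarrow z$ gives $\mathbb{P}(z_\omega^*\le z)\le z$, so $\mathbb{P}(z_\omega^*\le z)=z$ and $z_\omega^*$ is uniform on $[0,1]$.

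I expect the only genuinely delicate step to be the passage from the countable set of rational starting points to all of $[0,1]$: this is exactly where the monotonicity of $T_0$ and $T_1$ is indispensable, and it is also what upgrades the almost-sure statement \emph{``the limit exists at every point''} to the much stronger \emph{``there is a single threshold''}. The remaining arguments are routine manipulations of almost-sure events together with an application of the already-known law of $Z_\infty$ for the BEC.
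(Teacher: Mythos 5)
Your argument is correct and takes essentially the same route as the paper's: monotonicity of $T_0$ and $T_1$, almost-sure convergence of $Z_n$ to $\{0,1\}$ for each fixed starting point, and the identity $\mathbb{P}^z(Z_\infty=1)=z$ to pin down the law of $z_\omega^*$. You are if anything more careful than the paper on two points — you intersect over a countable dense set of starting points to build a single full-measure event explicitly (where the paper leans on an implicit Fubini-type swap of ``for a.e.\ $z$, for a.e.\ $\omega$''), and you resolve the strict-versus-nonstrict inequality issue in the distribution step by the sandwiching $\mathbb{P}(z_\omega^*<z)\le z\le\mathbb{P}(z_\omega^*\le z)$ followed by right-continuity.
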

Looking more closely at \eqref{equ:equivalent}, by the above lemma we conclude that as $n$ grows large, the maps $\phi_{\omega_n}$ that activate the identity function $\mathbb{I}(.)$ must have their threshold point sufficiently close to $z$.
\subsection{Properties of the functions $\theta_n$ and $p_n$}
In this part we focus on the asymptotic value of functions $\theta_n ^{a,b}(z)$ and $p_n^{a,b}(z)$ given by \eqref{p} and \eqref{theta}. The following lemma states that the choice of $a$ and $b$ is not important.
\begin{lemma} [Equality of the limsups and liminfs] \label{equal}
For  two intervals $[a,b] \text{, } [c,d] \in (0,1)$,
 such that $a \leq b^2$ and $c \leq d^2$ and for a consistent pair $(z, \delta)$ we have $$\lambda_{u}(z,a,b,\delta)=\lambda_{u}(z,c,d,\delta),$$ and $$\lambda_{l}(z,a,b,\delta)=\lambda_{l}(z,c,d,\delta) .$$
\QED
\end{lemma}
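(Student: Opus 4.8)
The plan is to argue by a two-sided comparison: a trivial monotonicity direction, and a reachability argument for the reverse direction.

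\emph{Monotonicity and reduction.} If $[a,b]\subseteq[c,d]$ then $\{Z_n\in[a,b]\}\subseteq\{Z_n\in[c,d]\}$, so $p_n^{a,b}(x)\le p_n^{c,d}(x)$ and hence $\theta_n^{a,b}(x)\le\theta_n^{c,d}(x)$ for every $x$ and $n$; taking $\sup_{x\in[z-\delta,z+\delta]}$ and then $\limsup_n$ (resp.\ $\liminf_n$) gives $\lambda_{u}(z,a,b,\delta)\le\lambda_{u}(z,c,d,\delta)$ and $\lambda_{l}(z,a,b,\delta)\le\lambda_{l}(z,c,d,\delta)$, with no hypothesis on the endpoints. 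Now, for two admissible intervals $[a,b],[c,d]\subseteq(0,1)$ (i.e.\ $a\le b^2$ and $c\le d^2$), put $[e,f]:=[\min(a,c),\max(b,d)]$; since $a\le b^2\le\max(b,d)^2$ and $c\le d^2\le\max(b,d)^2$, the interval $[e,f]$ is admissible and contains both $[a,b]$ and $[c,d]$. Thus it suffices to prove the reverse inequalities $\lambda_{u}(z,a,b,\delta)\ge\lambda_{u}(z,e,f,\delta)$ and $\lambda_{l}(z,a,b,\delta)\ge\lambda_{l}(z,e,f,\delta)$ for an admissible $[a,b]\subseteq[e,f]\subseteq(0,1)$; applying this to $[a,b]$ and to $[c,d]$ and combining with monotonicity then yields the lemma.

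\emph{The engine.} The key fact is that there exist $M\in\naturals$ and $\eta>0$ with $\mathbb{P}^{v}(Z_M\in[a,b])\ge\eta$ for every $v\in[e,f]$. Granting this: since $Z_n=T_{B_n}(Z_{n-1})$ is a time-homogeneous Markov chain, the Markov property at time $n$ gives, for every $x$, $p_{n+M}^{a,b}(x)=\mathbb{E}^{x}[\mathbb{P}^{Z_n}(Z_M\in[a,b])]\ge\eta\,\mathbb{P}^{x}(Z_n\in[e,f])=\eta\,p_n^{e,f}(x)$. Hence $\theta_{n+M}^{a,b}(x)\ge\frac{\log\eta}{n+M}+\frac{n}{n+M}\theta_n^{e,f}(x)$; taking $\sup$ over $x\in[z-\delta,z+\delta]$ and then $\limsup_n$ (resp.\ $\liminf_n$), and noting that $\frac{\log\eta}{n+M}\to0$ and $\frac{n}{n+M}\to1$ (the case in which $\lambda_{u}(z,e,f,\delta)$ or $\lambda_{l}(z,e,f,\delta)$ equals $-\infty$ being trivial, so that $\sup_x\theta_n^{e,f}(x)$ may be assumed bounded along the relevant subsequence), we obtain $\lambda_{u}(z,a,b,\delta)\ge\lambda_{u}(z,e,f,\delta)$ and $\lambda_{l}(z,a,b,\delta)\ge\lambda_{l}(z,e,f,\delta)$.

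\emph{The reachability fact --- the main obstacle.} Starting from $v\in[e,f]$ one builds a finite word in $T_0,T_1$ driving $v$ into $[a,b]$. First apply $T_1$ (squaring) until the value first drops below $a$; since $v\le f<1$ this takes a bounded number of steps and lands in a fixed sub-interval of $(0,a)$ bounded away from $0$. Next apply $T_0$: each step sends $y\mapsto y(2-y)$ with $2-y\in(1,2)$, so after finitely (again boundedly) many steps the value first enters $[a,2a)$, in fact $[a,\,1-(1-a)^2)\subseteq[a,1)$. If it is $\le b$ we are done. If it exceeds $b$, apply $T_1$ once more: squaring from any point of $(b,1)$, the first iterate that is $\le b$ lies in $(b^2,b]$, and $b^2\ge a$ --- this is precisely where $a\le b^2$ is used --- so we land in $[a,b]$. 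Uniformly over $v\in[e,f]$ all intermediate quantities stay bounded away from $0$ and $1$, so the total length is at most a constant $M_0$, whence $\mathbb{P}^{v}(Z_k\in[a,b])\ge2^{-M_0}$ for some $k=k(v)\le M_0$. The one subtlety is to make the hitting time the same constant $M$ for all $v$: by the same construction one can return to $[a,b]$ from any point of $[a,b]$ in boundedly many steps, and the set of lengths $r$ such that every point of $[a,b]$ admits a length-$r$ return to $[a,b]$ is closed under addition; exhibiting two coprime such lengths (using the closed forms $T_1^{(j)}(x)=x^{2^j}$ and $T_0^{(j)}(x)=1-(1-x)^{2^j}$ together with $a\le b^2$) then shows, via a numerical-semigroup argument, that every sufficiently large length occurs, so any reaching word may be padded to the common length $M$. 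This bookkeeping --- the uniform boundedness of each phase and the padding to an exact length --- is the technical heart; everything else is routine. (For the $\limsup$ statement alone the padding can be avoided: summing $p_{n+j}^{a,b}(x)\ge 2^{-j}\,\mathbb{P}^{x}(Z_n\in[e,f],\,k(Z_n)=j)$ over $0\le j\le M_0$ and pigeonholing gives $\max_{0\le j\le M_0}p_{n+j}^{a,b}(x)\ge\frac{2^{-M_0}}{M_0+1}\,p_n^{e,f}(x)$, which suffices along the subsequence realizing the $\limsup$.)
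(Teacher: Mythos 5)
Your architecture is: (i) trivial monotonicity in the interval, (ii) reduce to a nested admissible pair $[a,b]\subseteq[e,f]$, (iii) a one-step Markov bound $p_{n+M}^{a,b}(x)\ge \eta\,p_n^{e,f}(x)$ fed by the reachability fact that every $v\in[e,f]$ hits $[a,b]$ in exactly $M$ steps with probability at least $\eta$. The core content is the same as the paper's: you need, for some fixed $m$, the covering $[e,f]\subseteq\bigcup_{\phi_{\omega_m}\in\Phi_m}\phi_{\omega_m}^{-1}[a,b]$, which is exactly what the paper extracts from the decomposition $p_{n+m}^{a,b}(z)=\sum_{\phi_{\omega_m}}2^{-m}\,\mathbb{P}^z\!\left(Z_n\in\phi_{\omega_m}^{-1}[a,b]\right)$ together with the assertion that this union eventually contains any fixed $[c,d]\subset(0,1)$. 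So the two proofs converge on the same covering lemma. What differs is the route to it: the paper states the covering outright (``it can easily be verified''), while you try to construct it from explicit hitting words. Your construction is more concrete and is a welcome attempt to flesh out the paper's assertion.

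However, your route creates a difficulty that the paper's does not, and you do not close it: your word lengths $k(v)\le M_0$ vary with $v$, while the Markov inequality needs a single $M$, i.e.\ the genuine covering $[e,f]\subseteq U_M$ for one fixed $M$. The padding you sketch — ``the set of self-return lengths is a numerical semigroup; exhibit two coprime elements'' — is exactly the nontrivial part, and it is not carried out; in fact the natural first candidate ($1$ is a return length, i.e.\ $T_0^{-1}[a,b]\cup T_1^{-1}[a,b]\supseteq[a,b]$) is \emph{false} for some admissible $[a,b]$ with $a\le b^2$ (it requires $\sqrt{a}+\sqrt{1-b}\le 1$, which already fails for, say, $a=2/3,\ b=\sqrt{2/3}$). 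Nor can you easily pad by appending pure $T_0$- or $T_1$-blocks: those move you off $[a,b]$, enlarging the target interval and giving the monotonicity inequality in the wrong direction. Finally, the pigeonhole fallback you offer only rescues the $\limsup$: from ``for every $n$ there exists $j(n)\le M_0$ with $p_{n+j(n)}^{a,b}(x)\ge c\,p_n^{e,f}(x)$'' you cannot conclude anything about a \emph{prescribed} target time $m$, since nothing forces $m=n+j(n)$ for some $n$; so the $\liminf$ half of the lemma is not established. In short, the structure is sound and matches the paper in spirit, but the uniform-length step is a real gap; the paper avoids it by proving the covering $[c,d]\subseteq\bigcup_{\phi_{\omega_k}}\phi_{\omega_k}^{-1}[a,b]$ at a single $k$ directly from the structure of the preimage intervals (using $a\le b^2$), rather than by assembling variable-length hitting words.
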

Therefore, without loss of generality we can fix the value of $a$ to $\frac 14$ and the value of $b$ to $\frac 34$ and prove all the statements that appear in the sequel assuming this specific choice of $a$ and $b$. However by Lemma~\ref{equal} there is no loss of generality in the original statement of the main results of the paper. Also, in the sequel $a$ and $b$ represent this specific choice mentioned above and we will drop the superscripts $a,b$ whenever it is
clear from the context.
\begin{lemma}[Inequalities between the functions $\theta_n$] \label{inequality}
 For $n \in \naturals$ and $z \in (0,1)$ we have
\begin{enumerate}
 \item[(a)]
\begin{equation*}\label{bound3}
  {\theta}_{n+1} (z) + \frac{1}{n+1}\geq  {\theta}_{n}(z).
\end{equation*}
\item[(b)]
\begin{equation*} \label{bound4}
 \theta_{n+1}(z) + \frac{1}{n+1} \geq \text{max} \{ \theta_n(z^2), \theta_n (2z-z^2) \}.
\end{equation*}
\end{enumerate}
\QED
\end{lemma}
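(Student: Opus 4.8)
The plan is to establish both inequalities first for the probabilities $p_n = p_n^{a,b}$ (with $a=\frac14$, $b=\frac34$ fixed as above) and then pass to $\theta_n$ using the elementary but essential fact that $\theta_n(z) = \frac1n \log p_n(z) \le 0$ for every $n$ and $z$, since $p_n(z) \le 1$. Whenever one of the probabilities on the right-hand side of (a) or (b) vanishes, the corresponding $\theta$ equals $-\infty$ and the inequality is vacuous, so we may assume all probabilities that appear are strictly positive. The main tool is the counting form of \eqref{equ:equivalent}: $2^n p_n(z)$ equals the number of maps $\phi_{\omega_n}\in\Phi_n$ with $\phi_{\omega_n}(z)\in[a,b]$.

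For (a), appending one symbol to a realization replaces $\phi_{\omega_n}$ by $\phi_{\omega_{n+1}} = T_{b_{n+1}}\circ\phi_{\omega_n}$ (Definition~\ref{phi}), so $\phi_{\omega_{n+1}}(z) = T_{b_{n+1}}(\phi_{\omega_n}(z))$. The crucial point is the covering property
\[
[a,b] \;\subseteq\; T_0^{-1}([a,b]) \,\cup\, T_1^{-1}([a,b]);
\]
for $a=\frac14$, $b=\frac34$ one has $T_1^{-1}([a,b])\cap[0,1] = [\frac12,\frac{\sqrt3}{2}]$ and $T_0^{-1}([a,b])\cap[0,1] = [1-\frac{\sqrt3}{2},\frac12]$, and these abut at $\frac12$ and jointly cover $[1-\frac{\sqrt3}{2},\frac{\sqrt3}{2}]\supseteq[\frac14,\frac34]$. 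Hence if $\phi_{\omega_n}(z)\in[a,b]$ then at least one of the two extensions $\phi_{(\omega_n,0)}(z),\phi_{(\omega_n,1)}(z)$ lies in $[a,b]$, which by the counting identity gives $2^{n+1}p_{n+1}(z) \ge 2^n p_n(z)$, i.e.\ $p_{n+1}(z)\ge\frac12 p_n(z)$. Taking logarithms, $(n+1)\theta_{n+1}(z) \ge n\theta_n(z)-1 \ge (n+1)\theta_n(z)-1$, the last step using $\theta_n(z)\le0$; dividing by $n+1$ proves (a).

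For (b), peel off the first symbol instead: with $\omega_{n+1}=(b_1,\dots,b_{n+1})$ we have $\phi_{\omega_{n+1}}(z) = \phi_{\omega'_n}(T_{b_1}(z))$ where $\omega'_n = (b_2,\dots,b_{n+1})$, and $\phi_{\omega'_n}$ ranges exactly once over $\Phi_n$ as $\omega'_n$ ranges over $\Omega_n$. So the number of $\phi_{\omega_{n+1}}\in\Phi_{n+1}$ with $b_1=1$ and $\phi_{\omega_{n+1}}(z)\in[a,b]$ equals the number of $\phi_{\omega'_n}\in\Phi_n$ with $\phi_{\omega'_n}(z^2)\in[a,b]$, i.e.\ $2^n p_n(z^2)$; likewise the count for $b_1=0$ is $2^n p_n(2z-z^2)$. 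Summing, $2^{n+1}p_{n+1}(z) = 2^n p_n(z^2) + 2^n p_n(2z-z^2)$, so $p_{n+1}(z) \ge \frac12\max\{p_n(z^2),p_n(2z-z^2)\}$. Taking logarithms and using $n\max\{\theta_n(z^2),\theta_n(2z-z^2)\} \ge (n+1)\max\{\theta_n(z^2),\theta_n(2z-z^2)\}$ (again since that maximum is $\le0$) gives $(n+1)\theta_{n+1}(z)+1 \ge (n+1)\max\{\theta_n(z^2),\theta_n(2z-z^2)\}$; dividing by $n+1$ proves (b).

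Everything above is routine. The one place real care is needed is the covering property in (a), which is precisely where the specific values $a=\frac14$, $b=\frac34$ enter — a less symmetric choice of $[a,b]$ can leave a sub-interval of $(a,b)$ that neither branch preimages into $[a,b]$. The only other delicate point, trivial once spotted, is the bookkeeping identity $nx\ge(n+1)x$ for $x\le0$, which converts a bound with $\frac1n$ slack into one with $\frac1{n+1}$ slack; I expect that to be the step most easily overlooked.
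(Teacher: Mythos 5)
Your proof is correct and matches the paper's approach exactly: the same one-step Markov decomposition $p_{n+1}(z)=\tfrac12\bigl[p_n(z^2)+p_n(2z-z^2)\bigr]$ for part~(b), the same covering $[\tfrac14,\tfrac34]\subseteq T_0^{-1}([\tfrac14,\tfrac34])\cup T_1^{-1}([\tfrac14,\tfrac34])$ (which the paper phrases via $a_2\le b_1$ with $a_1=1-\sqrt{1-a}$, $a_2=\sqrt a$, etc.) for part~(a), and the same observation that $\theta_n\le 0$ converts the natural slack $\tfrac1n$ into $\tfrac1{n+1}$. The only discrepancy is cosmetic: the paper's appendix inadvertently swaps the two labels, so that the passage headed ``part (a)'' actually proves inequality~(b) of the lemma and vice versa, whereas your write-up assigns them correctly.
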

Lemma~\ref{inequality} relates the values of the functions $\theta_n$ on different point of the interval $(0,1)$ together. The result of Lemma~\ref{inequality} can be formalized more generally in the following way. We first define the sets $F_z^n$ and $B_z^n$ for $z \in (0,1)$ and $n \in \naturals$. These sets and their asymptotic properties are among the main tools in proving the main results.
\begin{definition}\label{expand}
Let $z \in (0,1)$. Let $ F_{z} ^{n} = \{\phi_{\omega_k}(z) \mid k \leq n, \phi_{\omega_k}
\in \Phi_k \}$ and $ B_{z} ^{n} = \{\phi_{\omega_k} ^{-1}(z) \mid k \leq n,
\phi_{\omega_k} \in \Phi_k \}$. We call the sets $F_{z} ^n$ and $B_{z} ^n$ the
$n$th forward and backward sets due to $z$. Further we call the sets
$F_{z}= \cup_{n} F_n ^{z}$ and $B_z=\cup_{n} B_n ^{z}$ the forward
and backward sets due to $z$. In general for an arbitrary set $A \in
(0,1)$, by the forward set due to $A$, denoted by $F_A$, we mean $F_A =
\bigcup_{z \in A} F_z $. The backward set due to $A$, denoted by $B_A$,
is defined similarly.
From Lemma~\ref{inequality} we can easily conclude the following.
\end{definition}
\begin{corollary} \label{inequalities}
Let $z \in (0,1)$ and $m,n \in \naturals$.
\begin{enumerate}
 \item For $x \in F_z^{m} $ we have
\begin{equation*}
 \theta_{n+m}(z) + \frac{m}{n} \geq \theta_n (x).
\end{equation*}
\item For $y \in B_z^m$ we have
\begin{equation*}
 \theta_{n+m}(y) + \frac{m}{n} \geq \theta_n (z).
\end{equation*}
\end{enumerate}
\QED
\end{corollary}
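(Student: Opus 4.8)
The plan is to derive both statements by iterating Lemma~\ref{inequality}, of which the corollary is just the multi-step version. Part~(b) of that lemma says precisely that one forward step from $z$ costs at most one unit of ``time index'' while adding at most $\frac{1}{n+1}$ to the bound, and part~(a) lets one raise the index freely at the same per-step cost; so the whole argument is a telescoping.

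For part~1, fix $x\in F_z^m$, so that $x=\phi_{\omega_k}(z)=T_{b_k}\circ\cdots\circ T_{b_1}(z)$ for some $k\le m$ and some $\omega_k=(b_1,\dots,b_k)$. Put $z_0=z$ and $z_j=T_{b_j}(z_{j-1})$ for $1\le j\le k$, so $z_k=x$ and each $z_j$ equals either $z_{j-1}^2$ or $2z_{j-1}-z_{j-1}^2$. Applying Lemma~\ref{inequality}(b) with base index $\ell$ gives $\theta_{\ell+1}(z_{j-1})+\frac{1}{\ell+1}\ge\theta_\ell(z_j)$, and chaining these for $(j,\ell)=(k,n),(k-1,n+1),\dots,(1,n+k-1)$ yields
\begin{equation*}
\theta_{n+k}(z)+\sum_{j=1}^{k}\frac{1}{n+j}\ge\theta_n(x).
\end{equation*}
If $k<m$, I then use Lemma~\ref{inequality}(a), namely $\theta_{\ell+1}(z)+\frac{1}{\ell+1}\ge\theta_\ell(z)$, a further $m-k$ times to replace $\theta_{n+k}(z)$ by $\theta_{n+m}(z)+\sum_{j=k+1}^{m}\frac{1}{n+j}$, which gives $\theta_{n+m}(z)+\sum_{j=1}^{m}\frac{1}{n+j}\ge\theta_n(x)$. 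Since $\frac{1}{n+j}\le\frac1n$ for every $j$, the harmonic sum is at most $\frac mn$, which is exactly the claimed inequality; the degenerate case $x=z$ (i.e.\ $k=0$) is the iterated form of part~(a) alone.

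For part~2, I would first observe that each $T_b$ is a strictly increasing bijection of $[0,1]$, hence so is every composition $\phi_{\omega_k}$, so $\phi_{\omega_k}^{-1}(z)$ is a single well-defined point; thus $y\in B_z^m$ means $\phi_{\omega_k}(y)=z$ for some $k\le m$, i.e.\ $z\in F_y^k\subseteq F_y^m$. Applying part~1 with $(z,x)$ replaced by $(y,z)$ then gives $\theta_{n+m}(y)+\frac mn\ge\theta_n(z)$. The whole proof is bookkeeping; the only mild care needed is keeping the telescoped harmonic terms $\sum_{j=1}^{m}\frac{1}{n+j}$ aligned with the index shift from $n$ to $n+m$ and checking they stay within the $\frac mn$ slack, so I do not anticipate any genuine obstacle.
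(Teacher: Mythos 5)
Your proof is correct and is exactly the iterated telescoping of Lemma~\ref{inequality} that the paper asserts (without writing out) when it says the corollary follows ``easily.'' The index bookkeeping is right: chaining part~(b) along the forward orbit $z_0,\dots,z_k$ gives $\theta_{n+k}(z)+\sum_{j=1}^{k}\frac{1}{n+j}\ge\theta_n(x)$, padding with part~(a) to length $m$ gives $\sum_{j=1}^{m}\frac{1}{n+j}\le\frac{m}{n}$, and the observation that each $T_b$ (and hence each $\phi_{\omega_k}$) is an increasing bijection of $[0,1]$ lets you reduce part~2 to part~1 with the roles of the endpoints swapped.
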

\section{Proof of the main results} \label{Se}
\subsection{Proof of Lemma~\ref{lim_equal}}
Consider the sequence $\{a_n\}_{n \in \naturals}$ defined as
\begin{equation}
 a_n := \sup_{z \in [\frac 14, \frac 34]} \theta_n(z).
\end{equation}
 We claim that for any consistent pair $(z, \delta)$, we have
\begin{align}\label{escape_def1}
\lambda_{u}(z,\frac 14, \frac 34, \delta) = \limsup_{n \to \infty} a_n  \\
\lambda_{l}(z,\frac 14, \frac 34, \delta)=  \liminf_{n \to \infty} a_n .
\end{align}
Clearly the above statement together with Lemma~\ref{equal} complete the proof of Lemma~\ref{lim_equal}. To prove the claim we use the following lemma.
\begin{lemma}\label{jump}
Let $[c,d]$ and $[e,f]$ be non-empty intervals in $(0,1)$. There exist a $m \in \naturals$ such that for $x \in [c,d]$ we have $B_x^m \cap [e,f] \neq \emptyset$.
\end{lemma}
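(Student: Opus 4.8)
The plan is to deduce Lemma~\ref{jump} directly from Lemma~\ref{threshold}, using that the threshold point $z_\omega^*$ is uniformly distributed. Fix $e',f'$ with $e<e'<f'<f$ (here $[e,f]$ is non-degenerate, as in every application of the lemma). I would produce the required $m$ from a single, carefully chosen realization $\omega^\star$: one whose threshold point $z_{\omega^\star}^*$ lies in the open interval $(e',f')$. Such an $\omega^\star$ exists because, by Lemma~\ref{threshold}, $\mathbb{P}(z_\omega^*\in(e',f'))=f'-e'>0$, which remains positive after discarding the null set of $\omega$ for which the conclusion of Lemma~\ref{threshold} fails; pick $\omega^\star$ in this positive-probability event. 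The point is that for this one realization the inverse maps $\phi_{\omega^\star_k}^{-1}$ will "collapse'' the whole interval $[c,d]$ into $[e,f]$ once $k$ is large, and that value of $k$ is the desired $m$.

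The crucial intermediate step is the following claim: for almost every $\omega=\{b_k\}_k$ and every $y\in(0,1)$, one has $\phi_{\omega_k}^{-1}(y)\to z_\omega^*$ as $k\to\infty$. Note first that $\phi_{\omega_k}^{-1}$ is meaningful, since $T_0$ and $T_1$ are continuous, strictly increasing bijections of $[0,1]$ (each fixes $0$ and $1$), so every $\phi_{\omega_k}$ is an increasing homeomorphism of $[0,1]$. To establish the claim I would fix an $\omega$ satisfying the conclusion of Lemma~\ref{threshold}, fix $y\in(0,1)$, and set $p_k:=\phi_{\omega_k}^{-1}(y)$, i.e.\ $\phi_{\omega_k}(p_k)=y$; then show every subsequential limit $p$ of $(p_k)$ equals $z_\omega^*$. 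Indeed, if $p<z_\omega^*$, choose $p'$ with $p<p'<z_\omega^*$; along the subsequence $p_k<p'$ eventually, so monotonicity gives $y=\phi_{\omega_k}(p_k)\le\phi_{\omega_k}(p')$, and the right-hand side tends to $0$ by Lemma~\ref{threshold} (as $p'<z_\omega^*$), contradicting $y>0$. The case $p>z_\omega^*$ is symmetric and contradicts $y<1$. Hence $p_k\to z_\omega^*$.

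Finally I would assemble the pieces: since $\phi_{\omega^\star_k}^{-1}$ is increasing, $\phi_{\omega^\star_k}^{-1}([c,d])\subseteq[\phi_{\omega^\star_k}^{-1}(c),\,\phi_{\omega^\star_k}^{-1}(d)]$, and by the claim both endpoints converge to $z_{\omega^\star}^*\in(e',f')$ as $k\to\infty$; hence there is $m\in\naturals$ with $\phi_{\omega^\star_m}^{-1}([c,d])\subseteq(e',f')\subseteq[e,f]$. For this $m$ and any $x\in[c,d]$, the point $\phi_{\omega^\star_m}^{-1}(x)$ belongs to $B_x^m$ by the definition of $B_x^m$ and to $[e,f]$ by the previous sentence, so $B_x^m\cap[e,f]\neq\emptyset$, which is the assertion of the lemma. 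The only genuinely non-trivial ingredient is the claim $\phi_{\omega_k}^{-1}(y)\to z_\omega^*$; once that is in hand, the remainder is just combining the uniform law of $z_\omega^*$ with the monotone collapse of the inverse maps. The small care points are the edge cases of the subsequence argument when $z_\omega^*\in\{0,1\}$ (which do not occur for the chosen $\omega^\star$) and the non-degeneracy of $[e,f]$ (needed only to guarantee $(e',f')\neq\emptyset$).
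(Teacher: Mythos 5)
Your proof is correct, but it follows a genuinely different route from the paper's. The paper first establishes (in Lemma~\ref{non-empty}(a), also via threshold points) the pointwise fact that for each fixed $z$ the backward set $B_z$ is dense in $(0,1)$, then upgrades this to the uniform statement of Lemma~\ref{jump} by a standard topological argument: continuity of each $\phi_{\omega_{l_z}}^{-1}$ yields an open neighborhood $U_z$ of $z$ that is pulled into $(e,f)$, compactness of $[c,d]$ extracts a finite subcover $U_{z_1},\dots,U_{z_l}$, and $m=\max_i l_{z_i}$ works. You instead prove a stronger uniform-collapse claim directly: for a.e.\ $\omega$ and \emph{every} $y\in(0,1)$, $\phi_{\omega_k}^{-1}(y)\to z_\omega^*$, which you verify by the subsequence-and-monotonicity argument. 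Fixing a single realization $\omega^\star$ with $z_{\omega^\star}^*\in(e',f')$ then gives one map $\phi_{\omega^\star_m}^{-1}$ that sends \emph{all} of $[c,d]$ into $[e,f]$, with no compactness step needed. Both proofs hinge on Lemma~\ref{threshold}; your version is arguably cleaner since it exhibits a single witnessing map for the whole interval rather than patching local maps together, at the cost of proving the slightly stronger convergence claim. Both arguments implicitly require $[e,f]$ to be non-degenerate, and you correctly flag this and note it holds in the paper's applications.
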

Now fix a pair $(z,\delta)$ and let the sequence $\{u_n\}_{n \in \naturals}$ be defined as
\begin{equation*}
 u_n = \sup_{x \in [z-\delta, z+\delta] } \theta_n(x).
\end{equation*}
By Lemma~\ref{jump} there exists a $m \in \naturals$ such that for $x \in [z-\delta, z+\delta] $ we have $B_x^m \cap [\frac 14, \frac 34] \neq \emptyset$. As a result, by Corollary~\ref{inequalities} part (b) for $n \in \naturals$ we have
\begin{equation} \label{e_1}
a_{n+m} \geq u_n - \frac{m}{n}.
\end{equation}
Similarly as above, there exists a $k \in \naturals$ such that  for $n \in \naturals$
\begin{equation} \label{e_2}
u_{n+k} \geq a_n - \frac{k}{n},
\end{equation}
and the claim can easily be followed from  \eqref{e_1} and \eqref{e_2}.
\subsection{Proof of Theorem~\ref{bounds}}
\subsubsection{Lower bound} We first consider the average of the functions $p_n (z)$ over $(0,1)$ and use it to provide bounds for $\lambda_l^{\text{BEC}}$. More precisely, let the sequence $\{b_n\}_{n \in \naturals}$ be defined by
\begin{equation}\label{b_n}
 b_n := \frac{1}{n} \log[ \int_{0}^{1} \mathbb{P}^{z} (Z_n \in [a,b]) dz].
\end{equation}
 We have
\begin{lemma} \label{bound_average}
 $\lambda_l^{\text{BEC}} \geq \liminf_{n \to \infty} b_n$.
\\ \QED
\end{lemma}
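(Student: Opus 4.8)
The plan is to bound the integrand $p_n(z)=\mathbb{P}^z(Z_n\in[1/4,3/4])$ above, uniformly in $z$, by a subexponential factor times $2^{n a_{n+M_n}}$ for a slowly growing sequence $M_n$, where $a_m:=\sup_{z\in[1/4,3/4]}\theta_m(z)$ is the sequence from the proof of Lemma~\ref{lim_equal}, for which $\lambda_l^{\text{BEC}}=\liminf_m a_m$. This will give $b_n\le a_{n+M_n}+M_n/n$, and since $M_n/n\to0$ the lemma reduces to the inequality $\liminf_n a_{n+M_n}\le\liminf_m a_m$, which I treat at the end.

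\emph{Killing the neighbourhood of the fixed points.} Since $T_0(x)=2x-x^2\ge T_1(x)=x^2$ on $[0,1]$, the largest value reachable from $z$ in $n$ steps is the $n$-fold composition $T_0^{n}(z)=1-(1-z)^{2^n}$; hence there is a constant $c_0>0$ (e.g.\ $c_0=\tfrac12\ln\tfrac43$) with $p_n(z)=0$ whenever $z\le c_0 2^{-n}$. The involution $x\mapsto 1-x$ interchanges $T_0$ and $T_1$ and fixes $[1/4,3/4]$, so $p_n(z)=p_n(1-z)$, whence also $p_n(z)=0$ for $z\ge 1-c_0 2^{-n}$. Thus $\int_0^1 p_n(z)\,dz=\int_{c_0 2^{-n}}^{1-c_0 2^{-n}}p_n(z)\,dz$.

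\emph{A uniform bound on what is left.} For $z\in[c_0 2^{-n},1-c_0 2^{-n}]$ I would show $B_z^{M_n}\cap[1/4,3/4]\ne\emptyset$ with $M_n=O(\log n)$ (concretely $M_n=\lceil\log_2 n\rceil+C$ for an absolute constant $C$): for $z\in[1/4,3/4]$ this is immediate; for $z<1/4$ it follows by iterating $T_1^{-1}$ (taking successive square roots), using that the intervals $T_1^{k}([1/4,3/4])=[(1/4)^{2^k},(3/4)^{2^k}]$, $k\ge0$, overlap in consecutive pairs (because $(3/4)^2>1/4$) and cover $(0,3/4]$, and that the exponent $k_z$ needed satisfies $2^{k_z}=O(\ln(1/z))=O(n)$, so $k_z=O(\log n)$; the case $z>3/4$ is the mirror image via $T_0$. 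Picking $y\in B_z^{M_n}\cap[1/4,3/4]$ and invoking Corollary~\ref{inequalities}(b) together with $\theta_{n+M_n}(y)\le a_{n+M_n}$ gives $\theta_n(z)\le a_{n+M_n}+M_n/n$, i.e.\ $p_n(z)\le 2^{M_n}2^{n a_{n+M_n}}$ uniformly on this range. Integrating in $z$ and applying $\tfrac1n\log$ produces $b_n\le a_{n+M_n}+M_n/n$.

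\emph{Reconciling the index shift --- the crux.} Taking $\liminf_n$ in the last display (and using $M_n/n\to0$) leaves us to prove $\liminf_n a_{n+M_n}\le\liminf_m a_m=\lambda_l^{\text{BEC}}$. This is the delicate step, because $(a_m)$ may a priori jump upward. The rescue is Lemma~\ref{inequality}(a): taking $\sup_{[1/4,3/4]}$ there yields $a_{m+1}\ge a_m-\tfrac1{m+1}$, so $(a_m)$ can \emph{drop} by at most $1/m$ from one index to the next. Now $s_n:=n+M_n$ is strictly increasing with $s_{n+1}-s_n\in\{1,2\}$ (since $M_{n+1}-M_n\in\{0,1\}$), so every large integer $k$ equals $s_n$ or $s_n+1$ for some $n$, and in the latter case $a_{s_n}\le a_k+1/k$; choosing a subsequence $k_j\to\infty$ with $a_{k_j}\to\liminf_m a_m$ and the matching $n_j\to\infty$ gives $a_{s_{n_j}}\le a_{k_j}+1/k_j\to\liminf_m a_m$, hence $\liminf_n a_{n+M_n}\le\liminf_m a_m$. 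Combined with $b_n\le a_{n+M_n}+M_n/n$ this yields $\lambda_l^{\text{BEC}}\ge\liminf_n b_n$. The two spots needing genuine care are exactly the scale matching --- the exclusion zone around $\{0,1\}$ must have width only $O(2^{-n})$ so that its complement is backward-reachable from $[1/4,3/4]$ in $O(\log n)$ steps, which is what makes both the penalty $M_n/n$ and the displacement $n\mapsto n+M_n$ vanish in the limit --- and this last reconciliation, which hinges on the one-sided estimate $a_{m+1}\ge a_m-\tfrac1{m+1}$.
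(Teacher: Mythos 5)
Your argument is correct and invokes the same core machinery as the paper's proof---backward maps that carry every point of the support of $p_n$ into $[\tfrac14,\tfrac34]$ in $O(\log n)$ steps, fed into Corollary~\ref{inequalities} at a penalty $O((\log n)/n)$---but the bookkeeping runs in the opposite direction, which changes what has to be argued at the end. The paper fixes the \emph{target} time $n$, chooses a shrinking source time $e_n = n - O(\log n)$, extracts from $\int_0^1 p_{e_n}(z)\,dz \gtrsim 2^{e_n(c-\gamma)}$ (with $c=\liminf_m b_m$) a single point $z_{e_n}$ lying a polynomially small distance from $\{0,1\}$ with $\theta_{e_n}(z_{e_n}) \gtrsim c-\gamma$, and pulls it back $n-e_n$ steps to some $y_n\in[\tfrac14,\tfrac34]$; this yields $a_n\ge c-2\gamma$ directly for every large $n$, so since $n$ is the free index there is nothing to reconcile. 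You instead fix the \emph{source} time $n$, note that $p_n$ vanishes identically within $O(2^{-n})$ of the endpoints (a sharper observation than the paper's crude mass estimate near $\{0,1\}$), bound $p_n$ uniformly on its support by $2^{M_n+n a_{n+M_n}}$ with $M_n=O(\log n)$, and integrate to get $b_n\le a_{n+M_n}+M_n/n$; the cost is the residual inequality $\liminf_n a_{n+M_n}\le\liminf_m a_m$, which you correctly flag as the crux and discharge via the one-sided estimate $a_{m+1}\ge a_m-\tfrac1{m+1}$ coming from Lemma~\ref{inequality}(a). Both routes are valid; the paper's shrinking-source choice of $e_n$ sidesteps the index reconciliation entirely, while your uniform-bound-then-integrate version makes the role of the integral more transparent and the support of $p_n$ more explicit.
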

We now proceed by finding a lower bound on the quantity $\liminf_{n \to \infty} b_n$. By \eqref{equal} we have:
\begin{align*}
 \int_{0} ^{1} \mathbb{P} ^{z}(Z_n \in [a,b] ) dz &=  \int_{0} ^{1} [\sum_{\phi_{\omega_n}} \frac{1}{2^n} \mathbb{I} (z \in  \phi_{\omega_n} ^{-1} [a,b])] dz \\
&=  \sum_{\phi_{\omega_n}} \frac{1}{2^n} [\int_{0} ^{1}  \mathbb{I} (z \in  \phi_{\omega_n} ^{-1} [a,b]) dz]\\
&= \mathbb{E} \vert \phi_{\omega_n} ^{-1} [a,b] \vert.
\end{align*}
Thus by taking $\frac{1}{n}\log()$ from both sides we have:
\begin{eqnarray} \label{jensen}
b_n = \frac{1}{n}\log \int_0 ^1  \mathbb{P} ^{z}(Z_n \in [a,b] ) dz & = & \frac{1}{n} \ln \mathbb{E} \vert \phi_{\omega_n} ^{-1} [a,b] \vert \\ \nonumber
& \geq & \mathbb{E} \frac{1}{n} \log \vert \phi_{\omega_n} ^{-1} [a,b] \vert   \nonumber
\end{eqnarray}
The value of $\lim_{n \to \infty}\mathbb{E} \frac{1}{n}
\ln \vert \phi_{\omega_n} ^{-1} [a,b] \vert $ is computed by the following lemma.
\begin{lemma} \label{log_measure}
 We have:
\begin{equation*}
 \lim_{n \rightarrow \infty} \mathbb{E} \frac{1}{n} \log \vert \phi_{\omega_n} ^{-1} [a,b] \vert = \frac {1}{2 \ln 2} - 1.
\end{equation*}
\QED
\end{lemma}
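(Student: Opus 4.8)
The plan is to compute the exponential growth rate of the Lebesgue measure of the random preimage $\phi_{\omega_n}^{-1}[a,b]$ by tracking how the derivative of $\phi_{\omega_n}$ behaves near the threshold point $z_\omega^*$ guaranteed by Lemma~\ref{threshold}. The key observation is that $\phi_{\omega_n} = T_{b_n} \circ \cdots \circ T_{b_1}$ is a monotone increasing bijection of $[0,1]$ onto itself (both $T_0$ and $T_1$ are increasing bijections of $[0,1]$), so $\phi_{\omega_n}^{-1}[a,b]$ is a single subinterval of $(0,1)$, and by the mean value theorem its length is $(b-a)/\phi_{\omega_n}'(\xi_n)$ for some $\xi_n$ in that interval. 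Since the interval $\phi_{\omega_n}^{-1}[a,b]$ shrinks to the threshold point $z_\omega^*$ as $n\to\infty$ (this is exactly what Lemma~\ref{threshold} says, together with the fact that $[a,b]\subset(0,1)$ is bounded away from the fixed points), the relevant quantity is the exponential rate of $\phi_{\omega_n}'(z_\omega^*)$, and $\frac{1}{n}\log|\phi_{\omega_n}^{-1}[a,b]| = -\frac1n \log \phi_{\omega_n}'(\xi_n) + o(1)$.

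Next I would expand the derivative by the chain rule: writing $Z_k = \phi_{\omega_k}(z)$ for the orbit of the starting point $z = z_\omega^*$, we get $\phi_{\omega_n}'(z) = \prod_{k=1}^{n} T_{b_k}'(Z_{k-1})$, hence $\frac1n \log \phi_{\omega_n}'(z) = \frac1n \sum_{k=1}^{n} \log T_{b_k}'(Z_{k-1})$. Now $T_1'(x) = 2x$ and $T_0'(x) = 2(1-x)$, so $\log T_{b_k}'(Z_{k-1})$ equals $\log 2 + \log Z_{k-1}$ when $b_k=1$ and $\log 2 + \log(1-Z_{k-1})$ when $b_k=0$. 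Since the $B_k$ are i.i.d.\ Bernoulli$(\tfrac12)$ and, conditioned on the past, $Z_k$ equals $Z_{k-1}^2$ or $1-(1-Z_{k-1})^2$ each with probability $\tfrac12$, one checks that $\log Z_k + \log(1-Z_k)$ and the increment $\log T_{b_k}'(Z_{k-1})$ combine nicely. The cleanest route is to consider the auxiliary process $L_k = \log(Z_k(1-Z_k))$ (equivalently track $\log Z_k$ and $\log(1-Z_k)$ separately) and observe that the increments telescope: summing $\log T_{b_k}'(Z_{k-1})$ over $k$ produces $n\log 2$ plus a telescoping-type contribution controlled by $\log Z_n(1-Z_n)$, which is exponentially small and thus contributes nothing in the limit after dividing by $n$. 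What survives in expectation is $\mathbb{E}\big[\frac1n\sum_{k}\log T_{b_k}'(Z_{k-1})\big] \to \log 2 + \mathbb{E}[\,\text{stationary value of }\log Z\text{-type term}\,]$; carrying out this computation (using that the threshold point / equivalently the limiting distribution of the orbit has a known law — indeed $z_\omega^*$ is uniform by Lemma~\ref{threshold}, and one can relate the time-averaged $\log$ along the orbit to an integral against a uniform-type density) yields $\lim \mathbb{E}\frac1n\log\phi_{\omega_n}'(z) = 1 - \frac{1}{2\ln 2}$ (note $\log$ is base $2$, so $\mathbb{E}\int_0^1 \log x\,dx = -1/\ln 2$ enters halved), and negating gives $\frac{1}{2\ln 2} - 1$, as claimed.

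To make the argument rigorous I would: (i) justify that $\phi_{\omega_n}^{-1}[a,b]$ is a genuine interval and that $\frac1n\log(b-a) \to 0$ so only the derivative term matters; (ii) control the error between $\phi_{\omega_n}'(\xi_n)$ at the random point $\xi_n$ and $\phi_{\omega_n}'(z_\omega^*)$ — here I would use that $T_0', T_1'$ have bounded logarithmic derivative away from $\{0,1\}$ combined with the contraction of $\phi_{\omega_n}$ on the relevant interval; (iii) apply a law of large numbers (or bounded convergence after taking expectation, which is what the lemma actually asks for, so we only need convergence in $\mathbb{E}$ — this simplifies matters, as we may avoid ergodic-theorem subtleties and instead directly compute $\mathbb{E}\,\frac1n\sum_k \log T_{b_k}'(Z_{k-1})$ and let $n\to\infty$); and (iv) handle the integrability at the endpoints, since $\log Z_{k-1}$ blows up when $Z_{k-1}$ is near $0$ — this is the main technical obstacle. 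The near-boundary behavior is precisely where the process spends most of its time once it has nearly polarized, so one must show these large negative contributions are summable/controlled; I expect to handle this by pairing each step $b_k=1$ (which gives $\log Z_{k-1}$, large and negative when $Z_{k-1}\approx 0$) against the telescoping identity $\sum_{k:\,b_k=1}\log Z_{k-1} \ge \frac12\log Z_n + (\text{controlled})$ coming from $Z_k = Z_{k-1}^2 \Rightarrow \log Z_k = 2\log Z_{k-1}$, which bounds the dangerous sum in terms of the single quantity $\frac1n\log Z_n \to 0$. Symmetrically near $1$. This telescoping bound is what turns a naively divergent-looking time average into the clean constant $\frac{1}{2\ln 2}-1$.
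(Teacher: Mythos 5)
Your plan (mean value theorem on $\phi_{\omega_n}^{-1}$, chain rule, time average of $\log T_{b_k}'$) is a reasonable forward-maps analogue of what the paper does, but there is a concrete mistake in the probabilistic structure you assign to the orbit $Z_{k-1}$, and the paper's formulation is designed precisely to avoid that trap. The paper works with $\psi_{\omega_n} = T_{b_n}^{-1}\circ\cdots\circ T_{b_1}^{-1}$ (note the \emph{reversed} order relative to $\phi_{\omega_n}^{-1}=T_{b_1}^{-1}\circ\cdots\circ T_{b_n}^{-1}$), observing that $\psi_{\omega_n}$ and $\phi_{\omega_n}^{-1}$ have the same law because the $B_k$ are exchangeable, and then tracks the \emph{backward} orbit $\bar Z_j=\psi_{\omega_j}(c)$ for a fixed $c\in(a,b)$. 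For $\bar Z_j$ the fresh coin $b_j$ acts on the outside, so $b_j$ is independent of $\bar Z_{j-1}$; the chain is honestly Markov with Lebesgue measure as its unique (hence ergodic) invariant law, and Birkhoff's theorem gives $\frac1n\sum_j\ln(T_{b_j}^{-1})'(\bar Z_{j-1})\to\mathbb{E}^\nu[\ln(T_{B_1}^{-1})'(z)]=\tfrac12-\ln 2$, which after converting to base $2$ is $\tfrac1{2\ln 2}-1$.

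In your forward version $Z_{k-1}$ is (the limit of) the orbit of the threshold point $z_\omega^*$, and you assert that conditioned on the past the coin $b_k$ is fair. That is false: $z_\omega^*$, and hence $Z_{k-1}=\phi_{\omega_{k-1}}(z_\omega^*)$, is a function of the entire tail $(b_k,b_{k+1},\dots)$, so $b_k$ is \emph{not} independent of $Z_{k-1}$; one can check that $\mathbb{P}(b_k=1\mid Z_{k-1}=z)=z$. This is not cosmetic. Each $Z_{k-1}$ is indeed marginally uniform, but with the fair-coin assumption one would get $\mathbb{E}[\log T_{b_k}'(Z_{k-1})]=\tfrac12\int_0^1\log(2z)\,dz+\tfrac12\int_0^1\log(2(1-z))\,dz=1-\tfrac1{\ln 2}$, whereas the correct value is $1-\tfrac1{2\ln 2}$; the extra factor $\tfrac12$ you invoke (``$\int_0^1\log x\,dx$ enters halved'') is exactly what the $z$-biased coin supplies through $2\int_0^1 z\log(2z)\,dz=1-\tfrac1{2\ln 2}$ and does not follow from the chain you describe. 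Your telescoping device (using $\log Z_k-\log Z_{k-1}=\log Z_{k-1}$ when $b_k=1$ and $=\log(2-Z_{k-1})$ when $b_k=0$, and symmetrically for $\log(1-Z_k)$) is a genuinely good idea that replaces the unbounded summands $\log Z_{k-1}$, $\log(1-Z_{k-1})$ by bounded ones and would help make a forward argument rigorous, but it does not by itself repair the misidentified transition kernel, so as written the proposal does not establish the lemma.
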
\
As a result of the above lemma and \eqref{jensen} we have
\begin{equation*}
\lambda_{l}^{\text{BEC}} \geq \liminf_{n \to \infty} b_n \geq  \frac {1}{2 \ln 2} - 1.
\end{equation*}
\subsubsection{Upper bound} Let the process $Q_n$ be defined as $Q_n=\sqrt{Z_n(1-Z_n)}$. Following the lead of \cite[Lemma 1]{ArT09}, we have
 \begin{equation*}
 Q_{n+1}= Q_n . \left\{
\begin{array}{lr}
\sqrt{Z_n(1+Z_n)} &  ; \text{if $B_n=1$},\\
\sqrt{(2-Z_n)(1-Z_n)}	& ; \text{if $B_n=0$}.
\end{array} \right.
\end{equation*}
As a result,
\begin{align*}
 & \mathbb{E}[Q_{n+1}  \mid Q_n] \\
 & \leq \frac {Q_n}{2} \max_{z\in [0,1]} \{ \sqrt{(2-z)(1-z)}+ \sqrt{z(1+z)} \}\\
& \leq Q_n  \frac{\sqrt{3}}{2}.
\end{align*}
Thus by noting that $\mathbb{E}(Q_0) \leq 1$ we get
\begin{equation} \nonumber
\mathbb{E}({Q_n}) \leq {(\frac {\sqrt{3}}{2})}^n.
\end{equation}
Hence by the Markov inequality, it is easy to see that for $0<a<b<1$ there is some
$\alpha=\alpha(a,b)>0$ such that:
\begin{equation}\nonumber
{\mathbb{P}}^{z}(Z_n \in [a,b]) \leq \alpha ({\frac{\sqrt{3}}{4}})^{n}.
\end{equation}
Therefore, for $z \in (0,1)$
\begin{align*}\nonumber
\frac{1}{n} \log \mathbb{P}^{z}(Z_n \in [a,b])  \leq \frac 12 \log \frac{3}{4} + \frac{\log \alpha}{n}.
\end{align*}
Hence by tending $n$ to infinity we get
\begin{align*}\nonumber
\limsup_{n \to \infty}\frac{1}{n} \log \mathbb{P}^{z}(Z_n \in [a,b])  \leq \frac 12 \log \frac{3}{4}.
\end{align*}
The above idea can be generalized in the following way: let $\alpha,\beta \geq 0$ and define $Q_n=Z_n^{\alpha}(1-Z_n)^{\beta}$. Going along the same lines as above, we get
\begin{equation*}
Q_{n+1}= Q_n . \left\{
\begin{array}{lr}
Z_n^{\alpha}(1+Z_n)^{\beta} &  ; \text{if $B_n=1$},\\
(2-Z_n)^{\alpha}(1-Z_n)^{\beta}	& ; \text{if $B_n=0$}.
\end{array} \right.
\end{equation*}
Let $\lambda(\alpha, \beta)$ be defined as
\begin{equation}
 \zeta(\alpha, \beta) =\frac 12 \max_{z \in [0,1]} \{z^{\alpha}(1+z)^{\beta}+ (2-z)^{\alpha}(1-z)^{\beta}\}.
\end{equation}
We have
\begin{equation*}
 \mathbb{E}(Q_n)\leq {\zeta(\alpha,\beta)}^{n}.
\end{equation*}
And as a result
\begin{align*}\nonumber
\limsup_{n \to \infty}\frac{1}{n} \log \mathbb{P}^{z}(Z_n \in [a,b]) \leq \log \zeta(\alpha,\beta).
\end{align*} 
Minimizing the value of $\zeta(\alpha,\beta)$ over all the values of $\alpha$ and $\beta$, we get
\begin{align*}\nonumber
\limsup_{n \to \infty}\frac{1}{n} \log \mathbb{P}^{z}(Z_n \in [a,b]) \leq -0.2669.
\end{align*} 
\section{Appendix}
\subsection{ Proof of Lemma~\ref{threshold}}
Recall that for a realization $\omega = \{b_k\}_{k \in \naturals} \in \Omega$ we define $\omega_n = (b_1, \cdots,b_n)$. The maps $T_0$ and $T_1$ and hence the maps $\phi_{\omega_n}$s are increasing on $[0,1]$. Thus
$\phi_{\omega_n}(z) \rightarrow 0$ implies that $\phi_{\omega_n}(z') \rightarrow 0$ for $z'
\leq z$ and $\phi_{\omega_n}(z) \rightarrow 1$ implies that $\phi_{\omega_n}(z') \rightarrow
1$ for $z' \geq z$. Moreover, we know that for almost every $z\in (0,1)$,
$\lim_{n \to \infty} \phi_{\omega_n} (z)$ is either $0$ or $1$ for almost every realization $\{\phi_{\omega_n}\}_{n \in \naturals}$. Hence it suffices to let
\begin{equation}\nonumber
 z_{\omega}^*=\inf \{ z: \phi_{\omega_n}(z)\rightarrow 1 \}.
\end{equation}
To prove the second part of the lemma, notice that
\begin{align*}
z &= \mathbb{P}^{z}(Z_{\infty}=1)\\
&= \mathbb{P}^{z}(\phi_{\omega_n}(z)\rightarrow 1 )\\
&= \mathbb{P}^{z}(\inf \{ z: \phi_{\omega_n}(z)\rightarrow 1 \} \leq z ) \\
&= \mathbb{P}^{z}(z_{\omega}^* < z).
\end{align*}
Which shows that $z_{\omega}^*$ is uniformly distributed on $[0,1]$.
\subsection{Proof of Lemma~\ref{equal}}
Using \eqref{equ:equivalent}, we can write $p_{n+m} ^{a,b}$ as follows:
\begin{align*} \nonumber
p_{n+m} ^{a,b}(z) &= \sum_{\phi_{\omega_{n+m}}} \frac{1}{2^{n+m}} \mathbb{I}(z \in \phi_{\omega_{n+m}} ^{-1} [a,b])\\
&=\sum_{\phi_{\omega_m}} \frac{1}{2^m} \sum_{\phi_{\omega_n}} \frac{1}{2^n} \mathbb{I}(z \in \phi_{\omega_n} ^{-1}(\phi_{\omega_m} ^{-1} [a,b]))\\
&=\sum_{\phi_{\omega_m}} \frac{1}{2^m} \sum_{\phi_{\omega_n}} \frac{1}{2^n} \mathbb{I}(z \in \phi_{\omega_n} ^{-1}[\phi_{\omega_m} ^{-1} (a),\phi_{\omega_m} ^{-1}(b)]).
\end{align*}
Thus by the union bound we get
\begin{align*}
2^m p_{n+m} ^{a,b}(z) \geq  \mathbb{P}^z (Z_n \in \bigcup_{\phi_{\omega_m}} [\phi_{\omega_m}^{-1}(a),\phi_{\omega_m}^{-1}(a)] ). 
\end{align*}
now since $a \leq b^2$, it can easily be verified that $\bigcup_{\phi_{\omega_m}} [\phi_{\omega_m}^{-1}(a),\phi_{\omega_m}^{-1}(a)]$ contains a closed interval which as $m$ grows large, its Lebesgue measure approaches one. As a result there exits a $k \in\naturals$ such that $[c,d] \subseteq \bigcup_{\phi_{\omega_k}} [\phi_{\omega_k}^{-1}(a),\phi_{\omega_k}^{-1}(a)]$ and as a result for $n \in \naturals$ we have
\begin{align*}
2^k p_{n+k} ^{a,b}(z) \geq  p_{n} ^{c,d}(z). 
\end{align*}
and since $\theta_{n} ^{a,b} (z) = \frac{1}{n} \log p_{n} ^{a,b}$, we easily get
\begin{align}\label{e1}
 \theta_{n+k} ^{a,b}(z) + \frac{k}{n+k} \geq   \theta_{n} ^{c,d}(z). 
\end{align}
Similarly, since $c \leq d^2$ there exists a $l \in \naturals$ such that for $n \in \naturals$ we have
\begin{align}\label{e2}
 \theta_{n+l} ^{c,d}(z) + \frac{l}{n+l} \geq   \theta_{n} ^{a,b}(z). 
\end{align}
Now the proof of the lemma follows by \eqref{e1}, \eqref{e2} and tending $n$ to infinity.
\subsection{Proof of Lemma~\ref{jump}}
We first need the following lemma.
\begin{lemma} [Denseness of the forward and backward sets] \label{non-empty}
Let $(a,b) \subseteq (0,1)$ be a non-empty interval and $z\in (0,1)$,
\begin{enumerate}
\item [(a)] For $z \in (0,1)$ the set $B_z$ is dense in $[0,1]$.
\item [(a)] Assuming $(a,b) \subseteq (0,1)$ is a non-empty interval, the set $U^{a,b}=\cup_{n \in \naturals} \cup_{\phi_{\omega_n } \in \Phi_n} \phi_{\omega_n}^{-1}(a,b)$ is a dense and open subset of $(0,1)$.
\item[(c)] The set of points $z \in (0,1)$ for which the  set $F_z$ is dense in $(0,1)$, is a dense subset of $(0,1)$.
\end{enumerate}
\end{lemma}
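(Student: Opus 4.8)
The plan is to prove the second part first, since the monotonicity argument behind it also gives the first part, and the third part will then follow from the second by the Baire category theorem. Throughout, recall that $T_0(x)=2x-x^2$ and $T_1(x)=x^2$ are continuous, strictly increasing bijections of $[0,1]$ fixing $0$ and $1$, so every $\phi_{\omega_n}$ is an increasing homeomorphism of $[0,1]$.

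For the openness and density of $U^{a,b}$: openness is immediate, since $\phi_{\omega_n}^{-1}(a,b)$ is an open subinterval of $(0,1)$ and $U^{a,b}$ is a union of such sets. For density, fix $x_0\in(0,1)$ and $\epsilon>0$ with $[x_0-\epsilon,x_0+\epsilon]\subseteq(0,1)$. By Lemma~\ref{threshold} the threshold point $z_\omega^*$ is uniformly distributed on $[0,1]$, so with positive probability $z_\omega^*\in(x_0-\epsilon,x_0+\epsilon)$; fix such a realization $\omega$. Then $\phi_{\omega_n}(x_0-\epsilon)\to 0$ and $\phi_{\omega_n}(x_0+\epsilon)\to 1$, so for all large $n$ we have $\phi_{\omega_n}(x_0-\epsilon)<a<b<\phi_{\omega_n}(x_0+\epsilon)$. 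Since $\phi_{\omega_n}$ is continuous and increasing, the intermediate value theorem gives $\xi\in(x_0-\epsilon,x_0+\epsilon)$ with $\phi_{\omega_n}(\xi)\in(a,b)$, i.e. $\xi\in\phi_{\omega_n}^{-1}(a,b)\subseteq U^{a,b}$. Hence $U^{a,b}$ meets every such interval and is dense. If one prefers not to invoke Lemma~\ref{threshold}, the required realization exists directly by a union bound: $\mathbb{P}(\phi_{\omega_n}(x_0-\epsilon)\to 0,\ \phi_{\omega_n}(x_0+\epsilon)\to 1)\ge (1-(x_0-\epsilon))+(x_0+\epsilon)-1=2\epsilon>0$.

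For the density of $B_z$ in $[0,1]$: given an arbitrary target interval $(c,d)\subseteq(0,1)$, fix (exactly as above) a realization $\omega$ with $z_\omega^*\in(c,d)$. For all large $n$ we then have $\phi_{\omega_n}(c)<z<\phi_{\omega_n}(d)$, so $z\in\phi_{\omega_n}((c,d))$, i.e. $\phi_{\omega_n}^{-1}(z)\in(c,d)\cap B_z$. To reach the endpoints of $[0,1]$, note that the all-zero word gives $\phi_{\omega_n}^{-1}(z)=(T_0^{-1})^n(z)\downarrow 0$ and the all-one word gives $\phi_{\omega_n}^{-1}(z)=z^{1/2^n}\uparrow 1$, so $0,1\in\overline{B_z}$; hence $B_z$ is dense in $[0,1]$. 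For the last part, enumerate the open subintervals of $(0,1)$ with rational endpoints as $\{(c_k,d_k)\}_{k\in\naturals}$. By the part already proved, each $U^{c_k,d_k}$ is open and dense in $(0,1)$, so by the Baire category theorem $\mathcal{D}:=\bigcap_k U^{c_k,d_k}$ is a dense (indeed $G_\delta$) subset of $(0,1)$. If $z\in\mathcal{D}$, then for every $k$ there is a word $\omega_n$ with $z\in\phi_{\omega_n}^{-1}(c_k,d_k)$, i.e. $\phi_{\omega_n}(z)\in(c_k,d_k)$; since the rational intervals form a basis, $F_z$ is dense in $(0,1)$. Thus the set of $z$ for which $F_z$ is dense contains the dense set $\mathcal{D}$.

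The only real content is the statement about $U^{a,b}$: once one sees that the right move is to track two nearby starting points and apply the intermediate value theorem to the monotone maps $\phi_{\omega_n}$ — using the a.s.\ convergence of forward orbits to $\{0,1\}$ together with the uniform law of the threshold point — everything else is formal, with the $B_z$ statement an immediate variant and the $F_z$ statement a routine Baire argument. I expect the only point needing care is arguing cleanly that a realization $\omega$ with the desired behaviour at the two chosen points exists, which is handled either by the uniform distribution of $z_\omega^*$ or by the union bound above.
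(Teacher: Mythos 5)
Your argument is correct and is essentially the paper's own proof: both locate a realization whose threshold $z_\omega^*$ falls in the target interval (via the uniform law of $z_\omega^*$ from Lemma~\ref{threshold}), exploit the monotonicity of $\phi_{\omega_n}$ together with $\phi_{\omega_n}(c)\to 0$ and $\phi_{\omega_n}(d)\to 1$ to capture the desired point or interval, and finish part (c) by intersecting the dense open sets $U^{c,d}$ over rational $c<d$ via Baire. The only difference is presentational: you prove the density of $U^{a,b}$ directly and then $B_z$ as a variant, while the paper proves the density of $B_z$ first and deduces the density of $U^{a,b}$ from it.
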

\begin{proof}
For part (a), let $(c,d)$ be a non-empty interval in $(0,1)$.  We must find a function $\phi_{\omega_l} \in \Phi_l$ such
that $\phi_{\omega_l} ^{-1}(z) \in (c,d)$ or equivalently $z \in \phi_{\omega_l}(c,d)$. But
as $(c,d)$ is non-empty and the set of threshold points is dense in
$(0,1)$, there exists a threshold point $z_{\omega}^* \in (c,d)$. Let $\{\phi_{\omega_n}\}$
be the realization which corresponds to $z_{\omega}^*$. Since $\phi_{\omega_n}(c) \to 0$
and $\phi_{\omega_n}(d) \to 1$, there exists some member of this realization,
namely $\phi_{\omega_l}$, such that $z \in \phi_{\omega_l} (c,d)$.
This completes the proof of part (a). The proof of part (b) follows from part (a) and the fact that the set $U^{a,b}$ is an countable union of open sets. To prove part (c), Consider the set
\begin{equation*}
A= \bigcap_{\substack{a,b \in \mathbb{Q} \cap (0,1)\\ a < b}} U^{a,b} ,
\end{equation*}
 where by $\mathbb{Q}$ we mean the set of rational numbers. For each $z \in A$ the set $F_z$ is dense in $(0,1)$. According to part (b), all the sets $U^{a,b}$ are dense and open in $(0,1)$ . As a result, since $[0,1]$ is a compact space, the set A is also dense in $(0,1)$ by the Baire category theorem.
\end{proof}
Let $z \in [c,d]$. According to Lemma~\ref{non-empty} part (a), since $B_z$ is dense in $(0,1)$,  there exists a $\phi_{\omega_{l_z}} \in \Phi_{l_z}$ such that $\phi_{\omega_{l_z}}^{-1}(z) \in (e,f)$. Now since the function $\phi_{\omega_{l_z}}^{-1}$ is continuous then there exists a neighborhood $U_z$ around $z$ such that  $\phi_{\omega_{l_z}}^{-1}(U_z) \in (e,f)$  and as a result for $n \geq l_{z}$ and $y \in U_z$ we have $B_y^{l_z} \cap [e,f]  \neq \emptyset $. Also, since $[c,d] \subseteq \cup_{z \in [c,d]} U_z$ and $[c,d]$ is compact, then there exist $z_1, \cdots, z_l \in [c,d]$ such that $[c,d] \subseteq \cup_{i=1}^{l} U_{z_i}$. The result now follows by letting $m = \text{max}_{1 \leq i \leq l} l_{z_i}$.
\subsection{Proof of Lemma~\ref{inequality}}
For part (a) we have
\begin{align*}
\mathbb{P}^{z}(Z_{n+1} \in [a,b]) &= \frac 12 \mathbb{P}^{z^2}(Z_{n} \in [a,b])\\
& \quad \quad \quad +\frac12 \mathbb{P}^{2z-z^2}(Z_{n} \in [a,b])].
\end{align*}
Hence,
\begin{align*}
2p_{n+1}^{a,b}(z) \geq \text{max} \{p_{n}^{a,b}(z^2),p_{n}^{a,b}(2z-z^2) \},
\end{align*}
and as a result,
\begin{align*}
&\frac{1}{n+1} \log p_{n+1}^{a,b}(z) + \frac{1}{n+1}\\
& \quad \geq \frac{n}{n+1}\text{max} \{\frac{1}{n} \log p_{n}^{a,b}(z^2),\frac{1}{n} \log p_{n}^{a,b}(2z-z^2) \}\\
&\quad \geq \text{max} \{\frac{1}{n} \log p_{n}^{a,b}(z^2),\frac{1}{n} \log p_{n}^{a,b}(2z-z^2).
\end{align*}
The proof of part (a) now follows by noting that $\theta_n^{a,b}=\frac{1}{n} \log p_{n}^{a,b}$.
For part (b), using \eqref{equ:equivalent}, we can write $p_{n+1} ^{a,b}$ as follows: Let $a_1=1-\sqrt{1-a}$,  $b_1=1-\sqrt{1-b}$, $a_2=\sqrt{a}$, $b_2=\sqrt{b}$. We have
\begin{align*} \nonumber
 p_{n+1} ^{a,b}(z) &= \sum_{\phi_{n+1}} \frac{1}{2^{n+1}} \mathbb{I}(z \in \phi_{n+1} ^{-1} [a,b])\\
&=\sum_{\phi_{\omega_n}} \frac{1}{2^n} [\frac{\mathbb{I}(z \in \phi_{\omega_n} ^{-1} [a_1,b_1])+\mathbb{I}(z \in \phi_{\omega_n} ^{-1} [a_2,b_2])}{2}].
\end{align*}
Hence it is easy to see that:
\begin{equation*} \label{recursive}
p_{n+1} ^{a,b} (z)=p_n ^{a,b} (z)+ \frac{1}{2} [p_n ^{a,a_2}(z)+p_n ^{b,b_1}(z)-p_n ^{a,a_1}(z)-p_n ^{b,b_2}(z)],
\end{equation*}
or equivalently
\begin{equation} \nonumber
p_{n+1} ^{a,b}(z)= \frac 12 ( p_{n}(z) ^{a_2,b_1}+ p_{n} ^{a_1,b_2}(z) ).
\end{equation}
Now by assigning $a=\frac 14$ and $b = \frac 34$ we have $a_2 \leq b_1$. Therefore $[a,b] \subseteq [a_1, b_1]\cup [a_2, b_2]$ and
\begin{equation} \label{main}
2 p_{n+1} ^{a,b}(z) \geq  p_{n} ^{a,b}(z),
\end{equation}
Hence part (b) can be easily followed in a similar way to part (a).
\subsection{Proof of Lemma~\ref{log_measure}}
In order to compute $\lim_{n \rightarrow \infty} \mathbb{E} \frac{1}{n}
\ln \vert \phi_{\omega_n} ^{-1} [a,b] \vert$,  we define a reverse stochastic
process $\{ \bar{Z}_n \}_{n \in \naturals \cup \{ 0 \}}$ via the inverse
maps $T_0 ^{-1}$, $T_1 ^{-1}$. Pick a sequence of i.i.d. symmetric
Bernoulli random variables $B_1, B_2, \cdots$ and define
$\bar{Z}_n = \psi_{\omega_n} (z)$ where $\omega_n \triangleq (b_1, \cdots,b_n) \in \Omega_n$ and
\begin{equation}
 \psi_{\omega_n} = T_{b_n} ^{-1} \circ T_{b_{n-1}} ^{-1} \circ \cdots \circ T_{b_1} ^{-1}.
\end{equation}
\begin{lemma}
The Lebesgue measure (or the uniform probability measure) on
$[0,1]$, denoted by $\nu$, is the unique, and hence ergodic, invariant measure for the
Markov process $\bar{Z}_n$.
\end{lemma}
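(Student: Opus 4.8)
The plan is to pass to cumulative distribution functions and reduce the whole statement to a martingale identity for the \emph{forward} Bhattacharyya process $Z_n$. I will take the state space to be the open interval $(0,1)$, which is harmless because $g_0 := T_0^{-1}$ and $g_1 := T_1^{-1}$ map $(0,1)$ into itself and the process started in $(0,1)$ never leaves it; on $[0,1]$ the degenerate laws $\delta_0,\delta_1$ are also invariant (since $0,1$ are fixed by $T_0^{-1},T_1^{-1}$), but they play no role here.

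First I would rewrite invariance in terms of the c.d.f. Writing $F(t) = \pi((0,t])$ for a candidate invariant probability measure $\pi$ on $(0,1)$, the identity $\pi = \tfrac12 (g_0)_{\ast}\pi + \tfrac12 (g_1)_{\ast}\pi$ together with $g_0^{-1}\big((0,t]\big) = (0,\,2t-t^2]$ and $g_1^{-1}\big((0,t]\big) = (0,\,t^2]$ (recall $g_0^{-1}=T_0$, $g_1^{-1}=T_1$) is equivalent to the functional equation
\begin{equation}\label{eq:feq}
F(t) \;=\; \tfrac12\,F(t^2)\;+\;\tfrac12\,F(2t-t^2), \qquad t\in(0,1).
\end{equation}
Since $F(t)=t$ solves \eqref{eq:feq}, $\nu$ is invariant (equivalently, one can push the uniform law through $g_0,g_1$ and check that the resulting densities $2(1-t)$ and $2t$ average to $1$).

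For uniqueness, the key point is that \eqref{eq:feq} says precisely $\mathbb{E}\big[F(Z_{n+1})\mid Z_n\big] = F(Z_n)$, so $M_n := F(Z_n)$ is a bounded martingale for the forward process. By the martingale convergence theorem $M_n$ converges a.s.; since $\pi$ is a probability measure concentrated on $(0,1)$ we have the one-sided limits $F(0^+)=0$ and $F(1^-)=1$, so, using $Z_n\to Z_{\infty}\in\{0,1\}$ a.s. and monotonicity of $F$, we get $M_n\to\ind_{\{Z_{\infty}=1\}}$ a.s. Taking expectations, using $\mathbb{E}[M_n]=F(z)$ for every $n$ together with boundedness, and invoking $\mathbb{P}^z(Z_{\infty}=1)=z$ (the capacity identity for $\text{BEC}(z)$, already used in the proof of Lemma~\ref{threshold}), yields $F(z)=\mathbb{P}^z(Z_{\infty}=1)=z$ for all $z$, hence $\pi=\nu$. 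Ergodicity is then automatic: the invariant probability measures form a convex set whose extreme points are exactly the ergodic ones, so a unique invariant measure is ergodic.

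The only delicate step is the boundary behaviour of $F$: one must verify that the one-sided limits $F(0^+)=0$, $F(1^-)=1$ hold, and that it is these limits — not the raw values $F(0),F(1)$ or possible interior atoms of $\pi$ — that determine $\lim_n F(Z_n)$; this is exactly why it is convenient to fix the state space to $(0,1)$ from the outset, after which the remaining steps are routine.
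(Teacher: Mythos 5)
Your proof is correct, and it takes a genuinely different route from the paper's.

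The paper proves uniqueness by showing that the backward process itself converges weakly to $\nu$ from every starting point in $(0,1)$: it observes that $\psi_{\omega_n}$ has the same law as $\phi_{\omega_n}^{-1}$ and appeals to Lemma~\ref{threshold}, which says $\phi_{\omega_n}^{-1}(z)\to z_\omega^*$ almost surely with $z_\omega^*\sim\nu$. Uniqueness then follows by integrating this convergence against an arbitrary invariant $\rho$. Your argument instead exploits a duality: the invariance equation for the \emph{backward} chain, read through the c.d.f., is precisely the statement that $F$ is harmonic for the \emph{forward} chain, so $F(Z_n)$ is a bounded martingale. Polarization ($Z_n\to Z_\infty\in\{0,1\}$ a.s.) plus the boundary limits $F(0^+)=0$, $F(1^-)=1$ forces $F(Z_n)\to\ind_{\{Z_\infty=1\}}$, and $L^1$-convergence of the bounded martingale together with $\mathbb{P}^z(Z_\infty=1)=z$ pins down $F(z)=z$. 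Both proofs ultimately rest on the same two facts about the forward process (a.s. polarization and the capacity identity), but you avoid invoking the threshold-point Lemma~\ref{threshold} and replace the weak-convergence argument by a one-line martingale calculation; the paper's route, conversely, gives the stronger statement that the backward chain equilibrates from every interior starting point, not merely that the invariant measure is unique. Your explicit restriction of the state space to $(0,1)$ to rule out the trivial invariant measures $\delta_0,\delta_1$, and your remark that uniqueness implies ergodicity via extremality, are both correct and in fact more careful than what the paper writes.
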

\begin{proof}
First note that if $\bar{Z}_n$ is distributed according to the Lebesgue measure, then
\begin{align*}
 \mathbb{P}( \bar{Z}_{n+1} < t)&=\frac{1}{2}\mathbb{P}( \bar{Z}_{n+1} < T_0(t))+ \frac{1}{2}\mathbb{P}( \bar{Z}_{n} < T_1(t))\\
&= \frac{1}{2}t^2 + \frac{1}{2} (2t-t^2)=t.
\end{align*}
This proves the invariance of the Lebesgue measure. In order to prove
the uniqueness, we will show that for any $z \in (0,1)$, $\bar{Z}_n$
converges weakly to a uniformly distributed random point in $[0,1]$, i.e.,
\begin{equation} \label{conv_Lebesgue}
 \bar{Z}_n ^{(z)}= \psi_{\omega_n}(z) \rightarrow \nu.
\end{equation}
Knowing that, uniqueness would be proved since for any invariant measure
$\rho$,
\begin{equation}
 \rho(.)= \mathbb{P} ^{\rho} (\bar{Z}_n \in .) = \int \mathbb{P} (\bar{Z}_n \in .) \rho (dz) \rightarrow \nu(.).
\end{equation}
To prove \eqref{conv_Lebesgue}, note that $\psi_{\omega_n}$  has the same
(probability) law as $\phi_{\omega_n} ^{-1}$ and we know that $\phi_{\omega_n} ^{-1}(z)
\rightarrow z_{\omega}^*$ almost surely and hence weakly but $z_{\omega}^*$ is distributed
according to $\nu$, which proves the statement.
\end{proof}
\begin{theorem} \label{log_measure}
 We have:
\begin{equation}
 \lim_{n \rightarrow \infty} \mathbb{E} \frac{1}{n} \ln \vert \phi_{\omega_n} ^{-1} [a,b] \vert = \frac 12 - \ln 2.
\end{equation}
\end{theorem}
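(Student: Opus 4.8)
The plan is to transfer the problem to the \emph{reverse} process $\bar Z_n$, for which the preceding lemma supplies the invariant measure and ergodicity we need. Since $\psi_{\omega_n}$ and $\phi_{\omega_n}^{-1}$ have the same law and both are increasing bijections of $[0,1]$, one has $\mathbb{E}\,\frac1n\ln\lvert\phi_{\omega_n}^{-1}[a,b]\rvert = \mathbb{E}\,\frac1n\ln\bigl(\psi_{\omega_n}(b)-\psi_{\omega_n}(a)\bigr)$. Set $g_0(z)=\ln (T_0^{-1})'(z) = -\ln 2 - \tfrac12\ln(1-z)$ and $g_1(z)=\ln (T_1^{-1})'(z) = -\ln 2 - \tfrac12\ln z$; by the chain rule $\ln\psi_{\omega_n}'(t) = \sum_{k=1}^{n} g_{b_k}\bigl(\bar Z_{k-1}^{(t)}\bigr)$, where $\bar Z_k^{(t)}=\psi_{\omega_k}(t)$ is the reverse Markov chain started at $t$. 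Two crude a priori bounds handle the bookkeeping later: since $g_0,g_1\ge -\ln 2$ on $(0,1)$ we get $\psi_{\omega_n}'(t)\ge 2^{-n}$, hence $\psi_{\omega_n}(b)-\psi_{\omega_n}(a)=\int_a^b\psi_{\omega_n}'(t)\,dt\ge (b-a)2^{-n}$, while trivially $\psi_{\omega_n}(b)-\psi_{\omega_n}(a)\le 1$; thus $\tfrac1n\ln(\psi_{\omega_n}(b)-\psi_{\omega_n}(a))$ stays in a fixed bounded interval, and $\lim$ and $\mathbb{E}$ may be interchanged by bounded convergence.

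The target constant is the Lyapunov exponent $\Lambda = \tfrac12\,\mathbb{E}_\nu[g_0]+\tfrac12\,\mathbb{E}_\nu[g_1]$, where $\nu$ is Lebesgue measure, the unique (hence ergodic) invariant measure of $\bar Z_n$ by the lemma just proved. A direct computation using $\int_0^1\ln x\,dx=-1$ and symmetry gives $\mathbb{E}_\nu[g_0]=\mathbb{E}_\nu[g_1]=-\ln 2-\tfrac12\int_0^1\ln x\,dx = \tfrac12-\ln 2$, so $\Lambda=\tfrac12-\ln 2$, exactly the value claimed.

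Next I would establish $\mathbb{E}\,\frac1n\ln\psi_{\omega_n}'(t)\to\Lambda$ for each fixed $t\in(0,1)$, i.e.\ $\frac1n\sum_{k=1}^n\mathbb{E}[g_{b_k}(\bar Z_{k-1}^{(t)})]\to\Lambda$. The nuisance is that $g_0,g_1$ blow up to $+\infty$ near the endpoints, so one cannot feed them directly into the weak convergence $\bar Z_{k}^{(t)}\to\nu$ or quote Birkhoff's theorem for the stationary chain. I would truncate from above: for $M>0$ the functions $g_b\wedge M$ are bounded and continuous on $[0,1]$, so weak convergence plus Cesàro averaging gives $\frac1n\sum_{k}\mathbb{E}[(g_{b_k}\wedge M)(\bar Z_{k-1}^{(t)})]\to \tfrac12\mathbb{E}_\nu[g_0\wedge M]+\tfrac12\mathbb{E}_\nu[g_1\wedge M]$, and since $g_b\ge g_b\wedge M$ and $\mathbb{E}_\nu[g_b\wedge M]\uparrow\mathbb{E}_\nu[g_b]$ as $M\to\infty$, this already yields $\liminf_n\mathbb{E}\,\frac1n\ln\psi_{\omega_n}'(t)\ge\Lambda$. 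For the reverse inequality one needs a uniform integrability estimate controlling the mass the orbit puts near the endpoints, e.g.\ $\sup_k \mathbb{E}\bigl[(-\ln(\bar Z_k^{(t)}(1-\bar Z_k^{(t)})))\,\mathbbm{1}\{\bar Z_k^{(t)}(1-\bar Z_k^{(t)})<\eta\}\bigr]\to0$ as $\eta\to0$; this is the main obstacle. It can be obtained by coupling $\bar Z_k^{(t)}$ to the stationary chain, for which the estimate reduces to the elementary $\int_0^\eta(-\ln x)\,dx\to0$ together with $\nu$ having bounded density, or directly from the explicit formulas $T_0^{-1}(y)=1-\sqrt{1-y}$, $T_1^{-1}(y)=\sqrt y$ near $0$ and $1$. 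Granting it, truncation from above gives $\limsup_n\mathbb{E}\,\frac1n\ln\psi_{\omega_n}'(t)\le\Lambda$, hence the claimed limit for $\psi_{\omega_n}'(t)$; since the estimate is uniform in $t\in[a,b]$, the bounds $-\ln 2\le\tfrac1n\ln\psi_{\omega_n}'(t)$ and $\int_a^b\tfrac1n\ln\psi_{\omega_n}'(t)\,dt\le\tfrac1n\ln\tfrac{1}{b-a}$ (reverse Jensen, using $\int_a^b\psi_{\omega_n}'\le1$) also let us conclude $\mathbb{E}\,\frac1n\int_a^b\ln\psi_{\omega_n}'(t)\,dt\to (b-a)\Lambda$ by Fubini and dominated convergence.

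It remains to pass from $\psi_{\omega_n}'$ to the interval length $\psi_{\omega_n}(b)-\psi_{\omega_n}(a)=\int_a^b\psi_{\omega_n}'(t)\,dt$. For the lower bound, Jensen's inequality gives $\frac1n\ln\int_a^b\psi_{\omega_n}'(t)\,dt\ge \frac1n\ln(b-a)+\frac1{b-a}\cdot\frac1n\int_a^b\ln\psi_{\omega_n}'(t)\,dt$, and taking expectations and $n\to\infty$ yields $\liminf_n\mathbb{E}\,\frac1n\ln\lvert\phi_{\omega_n}^{-1}[a,b]\rvert\ge\Lambda$. For the upper bound, the mean value theorem gives $\psi_{\omega_n}(b)-\psi_{\omega_n}(a)=(b-a)\,\psi_{\omega_n}'(\xi_n)$ with $\xi_n\in(a,b)$; since $\xi_n$ lies between the orbits started at $a$ and at $b$, which synchronize to the common threshold point $z_\omega^*\sim\nu$, one runs the same truncated ergodic argument with the random intermediate point in place of $\bar Z_{k-1}^{(t)}$ to obtain $\limsup_n\mathbb{E}\,\frac1n\ln(\psi_{\omega_n}(b)-\psi_{\omega_n}(a))\le\Lambda$ (the a priori bounds of the first paragraph justify interchanging limit and expectation throughout). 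Combining the two inequalities gives $\lim_n \mathbb{E}\,\frac1n\ln\lvert\phi_{\omega_n}^{-1}[a,b]\rvert = \Lambda = \tfrac12-\ln 2$.
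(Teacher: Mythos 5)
Your proof follows the same route as the paper: pass to the reverse process $\psi_{\omega_n}$, apply the mean value theorem to reduce to $\tfrac1n\ln\psi'_{\omega_n}$, expand by the chain rule into a Birkhoff-type average of $\ln (T_b^{-1})'$ along the reverse orbit, appeal to the unique (ergodic) invariant measure $\nu$ established in the preceding lemma, and compute $\mathbb{E}_\nu[\ln (T_{B_1}^{-1})'] = \tfrac12 - \ln 2$. You differ only in filling technical gaps the paper's one-line invocation of ``the ergodic theorem'' leaves open — the unboundedness of $\ln (T_b^{-1})'$ near the endpoints, the fact that the chain is started at a fixed point rather than from $\nu$, the interchange of limit and expectation, and the $n$-dependent random MVT point — via a priori bounds, truncation plus weak convergence, and a (sketched) uniform integrability estimate, which makes your version more rigorous than the paper's but structurally identical.
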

\begin{proof}
We have:
\begin{equation} \nonumber
 \vert \psi_n [a,b] \vert= \psi_n(a)-\psi_n (b) = \psi'_n (c)(b-a),
\end{equation}
for some $c \in (a,b)$.
And by chain rule,
\begin{small}
\begin{align*}
\psi'_n (c)&= (T_{b_n} ^{-1} \circ T_{b_{n-1}} ^{-1} \circ \cdots \circ T_{b_1} ^{-1})'(c)\\
&={T_{b_1} ^{-1}}'(c). {T_{b_2} ^{-1}}'(T_{b_1} ^{-1}(c)). \cdots. {T_{b_n} ^{-1}}'(T_{b_{n-1}} ^{-1} \circ  \cdots \circ T_{\sigma_1} ^{-1}(c))\\
&={T_{b_1} ^{-1}}'(\psi_0 (c)). {T_{b_2} ^{-1}}'(\psi_1 (c)). \cdots. {T_{b_n} ^{-1}}'(\psi_{n-1}(c))).
\end{align*}
\end{small}
Or after taking logarithm,
\begin{equation}\nonumber
 \frac{1}{n} \ln (\psi '_{\omega_n} (c))=\frac{1}{n} \sum_{j=1} ^{n} \ln {T_{b_j} ^{-1}}' (\psi_{j-1}(c)).
\end{equation}
But according to the ergodic theorem, the last expression should (almost
surely) converge to the expectation of $\ln {T_{B_1} ^{-1}}'(z)$,
where $z$ is assumed to be distributed according to $\nu$. This can be
easily computed as
\begin{align*}
\mathbb{E}^{\nu}[\ln {T_{B_1} ^{-1}}'(z)]&= \frac{1}{2} \int_{0} ^{1} \ln (\sqrt{x})' dx +\frac{1}{2} \int_{0} ^{1} \ln (1-\sqrt{1-x})' dx \\
&=\frac{1}{2}- \ln 2.
\end{align*}
This completes the proof.
\end{proof}
\subsection{Proof of Lemma~\ref{bound_average}}
Define $c =  \liminf_{n \to \infty} b_n$ and let $\gamma$ be an arbitrary positive value. Our aim is to show that $\lambda_l^{\text{BEC}} \geq c- 2 \gamma$. Since  $c =  \liminf_{n \to \infty} b_n$, there exists a $K \in \naturals$ such that for $n \geq K$ we have $b_n \geq c- \gamma$. In other words for $n \geq K$ we have
\begin{equation} \nonumber
 \int_{0} ^{1} \mathbb{P} ^{z}(Z_n \in [a,b] ) dz > 2 ^ {n (c - \gamma)}.
\end{equation}
Hence for any $n >  \text{max} \{K, \frac{1}{\gamma} \}$ there exists
a $z_n \in (2 ^ {n (c - 2 \gamma)}, 1- 2 ^ {n (c - 2\gamma)})$
such that $\theta(z_n) \geq c - \gamma$. For $n > 2 \text{max} \{K, \frac{1}{\gamma} \}$ define $e_n= \lfloor n- \log_2 (-n(c-\gamma)) \rfloor$
and consider the function $\theta_{e_n}(z)$ and the particular point
$z_{e_n}$. Consider the set $B_{z_{e_n}} ^ {n-e_n} $. By Lemma~\ref{inequalities}, for any $y\in B_{z_{e_n}} ^ {n-{e_n}} $ we have:
\begin{equation} \label{equ:ff}
 \theta_{n} (y) \geq \theta_{{e_n}}(z_{{e_n}})-\frac{n-{e_n}}{{e_n}}.
\end{equation}
On the other hand, consider the functions $T_0^{-1}(z)=z^{\frac 12}$ and $T_1^{-1}(z)=1-\sqrt{1-z}$. We have
\begin{align*}
\overbrace{ T_0^{-1} \circ \cdots T^{-1}_0 } ^{n - {e_n} \text{times}}  (2^{n(c-\gamma)})  & = (2^{n(c-\gamma)})^{\frac{1}{2^{n-{e_n}}}}\\
& \geq 2^{n(c-\gamma) \times \frac{1}{-n(c-\gamma)}}\\
&  =  \frac 12.
\end{align*}
Similarly it is easy to see that if we apply $n-{e_n}$ times the function
$T_1^{-1}$ on $1- 2^{n(c-\gamma)}$, the resulting value is less than
$\frac 12$. As a result, it is easy to see that $B_{z_{e_n}} ^ {n-{e_n}}
\cap [\frac 14, \frac34] \neq \emptyset$. We further have: $\lim_{n
\to \infty} \frac{n-{e_n}}{{e_n}} \to 0$ or there exists a $K' \in \naturals$
such that for $n \geq K'$ we have $\frac{n-{e_n}}{{e_n}} < \gamma$ .  Therefore, by \eqref{equ:ff} there exits a $y_n \in
B_{z_{e_n}} ^ {n-{e_n}} \cap [\frac14, \frac 34]$ such that:
\begin{equation} \nonumber
 \theta_n(y_n) \geq c-  \gamma -\gamma.
\end{equation}
Hence for $n \geq \text{max} \{K', K\}$ we have
\begin{equation} \nonumber
\sup_{z\in [\frac14, \frac 34]} \theta_n(z)  \geq \theta_n(y_n) \geq c-  2 \gamma.
\end{equation}
\bibliographystyle{IEEEtran}
\bibliography{lth,lthpub}
\end{document}